\theoremstyle{definition}
\newtheorem{definition}{Definition}[section]
\newtheorem{theorem}{Theorem}[section]
\newtheorem{corollary}{Corollary}[section]
\newtheorem{lemma}{Lemma}[section]
\begin{document}
%
\title{Estimating Parameters of Large CTMP from Single Trajectory with Application to Stochastic Network Epidemics Models}
%
%
%
%

\author{Seyyed A. Fatemi,~\IEEEmembership{Member,~IEEE,}
        June Zhang,~\IEEEmembership{Member,~IEEE}
\IEEEcompsocitemizethanks{\IEEEcompsocthanksitem J. Zhang is with the Department
of Electrical and Computer Engineering, University of Hawai'i at M\={a}noa, Honolulu, HI, 96822.\protect\\
E-mail: zjz@hawaii.edu
}
\thanks{Manuscript received April 19, 2005; revised August 26, 2015.}}

%
%

\markboth{Journal of \LaTeX\ Class Files,~Vol.~14, No.~8, August~2015}%
{Shell \MakeLowercase{\textit{et al.}}: Bare Demo of IEEEtran.cls for Computer Society Journals}
%



\IEEEtitleabstractindextext{%
\begin{abstract}
Graph dynamical systems (GDS) model dynamic processes on a (static) graph. Stochastic GDS has been used for network-based epidemics models such as the contact process and the reversible contact process. In this paper, we consider stochastic GDS that are also continuous-time Markov processes (CTMP), whose transition rates are linear functions of some dynamics parameters $\theta$ of interest (i.e., healing, exogeneous, and endogeneous infection rates). Our goal is to estimate $\theta$ from a single, finite-time, continuously observed trajectory of the CTMP. Parameter estimation of CTMP is challenging when the state space is large; for GDS, the number of Markov states are \emph{exponential} in the number of nodes of the graph. We showed that holding classes (i.e., Markov states with the same holding time distribution) give efficient partitions of the state space of GDS. We derived an upperbound on the number of holding classes for the contact process, which is polynomial in the number of nodes. We utilized holding classes to solve a smaller system of linear equations to find $\theta$. Experimental results show that finding reasonable results can be achieved even for short trajectories, particularly for the contact process. In fact, trajectory length does not significantly affect estimation error.


\end{abstract}

\begin{IEEEkeywords}
graph dynamical system, interacting particle system, CTMP, Markov process, network-based epidemics
\end{IEEEkeywords}}

\maketitle

\IEEEdisplaynontitleabstractindextext

%
\IEEEpeerreviewmaketitle

\IEEEraisesectionheading{\section{Introduction}\label{sec:introduction}}

%
%
%
%
Graphs are used to represent dependencies between variables. There has been a great deal of interest in graph-based data analysis. Graph/network-based dynamical systems are less well-studied and used in application. Graph dynamical systems (GDS) model processes on graphs. The behavior of these systems are determined by: 1) a graph, $G$, which characterizes the dependencies amongst variables (i.e., nodes), and 2) dynamics rules that describe how the variable values change over time. Often, the state of a variable is assumed to be explicitly dependent on the states of its neighboring nodes in $G$. In this paper, we will focus on systems where $G$ has a \emph{finite} number of nodes and is \emph{static} over time. Previously, graph dynamical systems over infinitely-size graph has been studied as either cellular automata or interacting particle systems, depending on if the system has discrete-time or continuous-time dynamics.

In stochastic GDS, the dynamic rules have a random component. This means that the future state of the system is characterized by a probability distribution. Stochastic GDS that are used to model the spread of epidemics over contact networks are known as \emph{stochastic network epidemics models}.

The Markov process is a natural mathematical tool for studying such dynamics. However, they are considered to be impractical for modeling GDS because the size of the state space, $D$, is exponential in the number of variables. Additionally, $D(D-1)$ parameters are needed to completely specify the Markov process. When the state space is very large, obtaining a sufficient number of observations to compute the likelihood for $D(D-1)$ parameters would require an infeasibly long observation window.

However, GDS-Markov processes generally have more structure than an arbitrary Markov process.  In many models, the pairwise transition probabilities/rates are functions of only a few variables. For example, the contact process is one of the most well-known GDS-continuous-time Markov process model \cite{harris1974contact}. It models infections between infected and healthy agents (i.e., nodes) in a known contact network. Only three parameters characterize the continuous-time dynamics of the model: 1) healing rate $\mu$, 2) exogeneous infection rate $\beta$ (i.e., infection from outside the contact network), 3) endogeneous infection rate $\delta$ (i.e., infection from contact with infected neighbor). Furthermore, the pairwise transition rates of the contact process are linearly dependent on $\mu, \beta, \delta$. We show that this induces equivalence classes in the diagonal entries of the transition rate matrix. These equivalence classes, which we call \emph{holding classes}, effectively reduces the dimensionality of the Markov process. This makes parameter estimation from a moderate-length trajectory possible. 

Leveraging the existence of holding classes and efficient least squares computation, we present an algorithm for learning the dynamics parameter, $\theta$, of GDS-continuous-time Markov process (CTMP) models whose transition rates are \emph{linearly} dependent on $\theta$. In this paper, we assume that the system is continuously observable (i.e., no missing observation) and that the graph $G$ is known. We applied our algorithm to the contact process and the reversible contact process. 

In Section~\ref{sec:contact} and~\ref{sec:rev}, we derived upperbounds on the number of holding classes for the contact process and the reversible contact process on arbitrary graph $G$. For the contact process, can be seen that this upperbound is polynomial, instead of exponential, in the number of variables. Therefore, the number of holding classes relative to the dimension of the state space \emph{decreases} with increasing $G$; we confirm this experimentally by exhaustively counting the number number of holding classes for small Erdos-Renyi and Watts-Strongatz graphs in Section~\ref{sec:numholdingexp}. Section~\ref{sec:exp} uses our algorithm in numerical experiments to estimate $\mu, \beta, \delta$ from synthetically generated trajectories on both 100-node random graphs and a 100-bus test power grid.

%
%
%

\section{Related Works}
We will review some of the current works on graph dynamical systems for the case where the underlaying network is \emph{static}. Related works can be found in various areas. Like the contact process, many such graph dynamical systems models are meant to model contagion-like behavior. As a result, there are a lot of overlap between works in epidemiology and general work on graph dynamical systems. We also review existing literatures on parameter estimation problems of continuous-time Markov processes; these approaches are intended for arbitrary Markov process and have not been applied to GDS due to the prohibitively large state space.

\subsection{Network-based Epidemics models and Social Contagion Models}
Efforts for mathematical modeling of contagion goes back to 17th century and Daniel Bernoulli's seminal work on smallpox disease spread. Typically, compartmental models are used, which segment individuals in a population by status such as susceptible ($S$, healthy), exposed ($E$, not infectious), infected ($I$, exposed and infectious), removed, etc. Epidemics models can be deterministic or stochastic \cite{kermack1927contribution,anderson1992infectious, dietz2002daniel}. 

Deterministic epidemics models predict the number/fraction of infected individuals in a population at some future time $t$ whereas stochastic models infer the probability distribution of the quantity of interest. Deterministic models are derived using mean-field approximation, which assume full-mixing (i.e every individual interacts with all other individuals with the same probability) and analyze the asymptotic (assuming an infinitely large population) behavior stochastic models. Classic epidemics models are time-series based; there are many approaches to learning the dynamics parameters of classic epidemics models \cite{Pan2014,Zheng2017, o2002tutorial, kypraios2017tutorial, Dutta2018}. 

One challenge is that the infection process are often not directly observable, and the likelihood is difficult to compute. Bayesian methods, which treats the unknown dynamics parameters and any unobserved quantities as random variables, are often favored. These approaches can be combined with with simple simulation models to infer the parameters that best fit the observed data. Reference~\cite{Dutta2018} uses approximate Bayesian computing (ABC) to estimate the dynamics parameter, $\theta$, and the single infection source of for a discrete-time Susceptible-Infected (SI) epidemics process.

Recently, greater attention has been paid to the inclusion of a heterogeneous contact networks within the epidemics models. These models are known as network epidemics models \cite{Pastor-Satorras2015a, Pellis2015}. Deterministic network-epidemics models extend mean-field results by incorporating statistics of the underlying network into the dynamics equations \cite{gleeson2013binary}. For example, the pairwise approximation model characterizes the dynamic of the number of contacts between an infected and susceptible individual,
\[
\frac{d [SI]}{dt} = -\tau[SI] + \tau[SSI] - \tau[ISI] + g[II] -g[SI],
\]
where $\tau$ and $g$ are the infection rate and healing rate respectively \cite{doi:10.1098/rsif.2005.0051}.

Stochastic network epidemics models are usually discrete-time (DTMP) or continuous-time (CTMP) Markov processes. A state in the Markov process is a possible configuration of the population. In discrete-time Markov model, the probability that a healthy node will be infected in the next time step is assumed to depend on how many infected neighbors it has in $G$ \cite{gomez2010discrete,Wang2017}. In continuous-time Markov model, the probability that a healthy node will be infected in a time interval $\tau$ is dependent on how many infected neighbors it has in $G$ \cite{Zhang2014,Zhang2015,Zhang2017,VanMieghem2012}. Stochastic network epidemics models have been studied in probability as interacting particle systems (IPS). The most well-known IPS model is the contact process \cite{harris1974contact, liggett1985interacting, griffeath1983binary}. These models also share many similarities with spin system models in physics \cite{glauber1963time}.

Research has mostly focused on the analysis of network epidemics models such as the impact of network topology on the epidemic threshold \cite{Wang2017, nowzari2016analysis}. Typically, these analytical results are derived for $t \to \infty$ and/or for an infinitely large network. Fewer work have focused on learning network epidemics models from data. The inclusion of $G$ induces a very large (discrete) state space, which makes parameter estimation difficult even if the infection process can be directly observed.

\subsection{Other Graph Dynamical System Models}

In social science, networked-based models of social contagion have been studied \cite{Centola2007}. These work differ from epidemics-based models in the contagion mechanism. For example, social contagion models often assume a directed contact network $G$. The edge direction implies that one person has more impact to affect the behavior/state of the other individual. Social contagion model also differentiate between simple and complex contagion. In simple contagion, a susceptible agent becomes infected with a fixed probability as a result of a one-time exposure to an infected neighbor. In complex contagion, a susceptible agent becomes infected with a probability that scales with the number of exposure to infected neighbors. 

In graph signal processing, the interest is in analyzing and filtering graph signal (real or complex variables associated with nodes in a static graph). Graph dynamical systems can be studied from the perspective of time-vertex graph signals \cite{loukas2019stationary, grassi2017time}. Various stationary properties are used to characterize the covariance between the nodal values in the graph and over time.

Dynamic graph models have also been studied as (time-varying) probabilistic graphical models. Reference~\cite{nodelman2012continuous} and~\cite{el2011continuous} introduced the continuous-time Bayesian network and continuous-time Markov network for representing dynamic directed (Bayesian) or undirected (Markov) probabilistic graphical model. Similar to works in network epidemics/graph dynamical systems, the dynamic of a variable (node in a network) is only dependent on the state of neighboring nodes. The CTMP model proposed by reference~\cite{el2011continuous}, called continuous-time Markov network (CTMN) studied, considered \emph{reversible} CTMPs only. A maximum likelihood estimator for learning the model parameters was presented for a binary-state CTMN. However, this was studied on a 4-node network, and the state space is only $D = 2^4$.

\subsection{Learning Continuous-time Markov Process}

Many work in parameter estimation of large-scale continuous-time Markov processes come from chemistry, where Markov State Models (MSM) are used to model molecular dynamics \cite{mcgibbon2015efficient, MSMpaper}. In these problems, only discrete-time observations are available. Therefore, a transition probability matrix, $P$, is learned from data and the used to estimate the transition rate matrix, $Q$. Large state-spaces are difficult to handle. If the methods do not assume nor exploit any additional structure in the Markov process, the runtime is on the order of $O(D^3)$ to $O(D^5)$, where $D$ is the number of Markov states. Reference~\cite{el2011continuous} proposed a more efficient maximum likelihood estimator of the transition rate matrix by exploiting the additional structure of \emph{reversible} continuous-time Markov processes.


\section{Continuous-time Markov Process}\label{sec:contmarkov}

The dynamic of a continuous-time Markov process, ${X(t), t\ge 0}$, taking values in a countable state space $\mathcal{D} = \{1,2, \ldots, D\}$, is completely described by the transition rate matrix, $Q =[q(i,j)], i,j \in \mathcal{D}$ (also known as the infinitesimal generator) \cite{norris1998markov, Kelly}. The transition rate from Markov state $i$ to state $j$ is
\[
q(i,j) = \lim_{\tau \to 0} \frac{P(X(t+\tau) = j | X(t) = i)}{\tau}, i \neq j.
\]
The diagonal entries of $Q$, $q(i,i), i \in \mathcal{D}$, are such that 
\begin{equation}\label{eq:diagonal}
\sum_{j \in \mathcal{D}} q(i,j) = 0.
\end{equation}
Therefore, 
\begin{equation}\label{eq:diagonal2}
q(i,i) = -\sum_{j \in \mathcal{D}} q(i,j).
\end{equation}
The continuous-time Markov process, $X(t)$, remains in state $i$ for a length of time that is exponentially distributed with rate $|q(i,i)|$; this is known as the \emph{holding time}. To reduce notation, whenever we refer to a diagonal entry $q(i,i)$, we mean the absolute value. 

\begin{definition}\label{def:holdingclass}
Two Markov states, $i, j \in \mathcal{D}, i \neq j$ belongs to the same \emph{holding class}, $\mathcal{H}$, if $q(i,i) = q(j,j)$. Then $q(\mathcal{H}, \mathcal{H}) = q(i,i) = q(j,j)$.
\end{definition}

Equivalently, two Markov states belong to the same holding class if their holding time has the same distribution. The state space of $X(t)$ can be partitioned into different holding classes: $\mathcal{D} = \{\mathcal{H}_1 \cup \mathcal{H}_2 \cup \ldots  \cup\mathcal{H}_K| \mathcal{H}_i \cap \mathcal{H}_j = \emptyset \}$.

\subsection{Continuous-time Markov Process with Linearly Dependent Transition Rates}

In this paper, we consider a sub-class of continuous-time Markov processes whose transition rates are linear functions depending on some underlying dynamic parameter,  $\theta \in \mathds{R}^b$, where $b << D$. This means that~\eqref{eq:diagonal2} can expressed in vector form as $F\theta$, where $F$ is some $D \times b$ matrix determined by how the transitions depend on $\theta$. However, we only need to be concerned with the unique diagonal values of $Q$ (i.e., holding classes), then
\begin{equation}\label{eq:linearsystem}
F\theta = \begin{bmatrix} 
q(\mathcal{H}_1,\mathcal{H}_1)\\ 
\vdots\\ 
q(\mathcal{H}_K,\mathcal{H}_K) 
\end{bmatrix}.
\end{equation}
As a result, the matrix $F$ is $K \times b$. Solving this system of equation is much more efficient when the number of holding classes is much smaller than the number of Markov states (i.e,. $K << D$). We will show that this is the case for network epidemics models such as the contact process. However, in order to solve the system of linear equations to find $\theta$, we need to be able to estimate the holding class rates, $q(\mathcal{H}_i, \mathcal{H}_i)$, from observation.

\subsection{Estimating Transitions Rates from Trajectory}


We assume that the states are continuously observed without noise. This means that 1) we do not miss any observations, 2) we know both the sampling time $t_i$ and the state of the system at $t_i$. Given a finite duration ($T= t_{M-1} - t_0$) trajectory
\[
\Sigma = \{x(t_0), x(t_1), x(t_2), \ldots, x(t_{M-1})\},
\]
it is known that the maximum likelihood estimator (MLE) of transition rate $q(i,j)$ is 
\begin{equation}\label{eq:MLEtransitionrate}
\widehat{q}(i,j)_{\text{MLE}}= \frac{N_{ij}(T)}{R_i(T)},
\end{equation}
where $N_{ij}(T)$ is the number of transitions from state $i \in \mathcal{D}$ to state $j \in \mathcal{D}$ in interval $T$ and
\[
R_i(T) = \int_{t_0}^{t_{M-1}} \mathds{1}(X(t) = i) dt, \quad i \in \mathcal{D}
\]
is the total amount of time $X(t)$ was in state $i$. The function $\mathds{1}(\cdot)$ is the indicator function. The MLE for the holding time rate is
\begin{equation}\label{eq:MLEdiagonaltransitionrate}
|\widehat{q}(i,i)_{\text{MLE}}| = \frac{\sum_{i\neq j }N_{ij}(T)}{R_i(T)}.
\end{equation}

Instead of considering individual Markov states, we can consider the holding classes. The MLE estimate of $q(\mathcal{H}_i,\mathcal{H}_i)$ is
\begin{equation}\label{eq:MLEdiagonaltransitionrate2}
\widehat{q}(\mathcal{H}_i, \mathcal{H}_i)_{\text{MLE}} =  \frac{\sum_{\mathcal{H}_i \neq \mathcal{H}_j} N_{\mathcal{H}_i, \mathcal{H}_j}(T)}{R_{\mathcal{H}_i}(T)},
\end{equation}
where $N_{\mathcal{H}_i, \mathcal{H}_j}(T)$ is the number of transitions from Markov states in holding class $\mathcal{H}_i$ to states in holding class $\mathcal{H}_j$ in $T$ intervals, and $R_{\mathcal{H}_i}(T)$ is the total amount of time the process remained in Markov states belonging to holding class $\mathcal{H}_i$.

It is known however, that the MLE of the exponential distribution is a biased estimate. An alternative to the MLE estimate is the uniformly minimum variance unbiased estimator (UMVUE):
\begin{equation}\label{eq:configdiag}
\widehat{q}(\mathcal{H}_i, \mathcal{H}_i)_{\text{UMVUE}} =  \frac{\sum_{\mathcal{H}_i \neq \mathcal{H}_j} N_{\mathcal{H}_i, \mathcal{H}_j}(T) - 1}{R_{\mathcal{H}_i}(T)}.
\end{equation}
Other estimators include the minimum mean squared error estimator (MMSE). Reference~\ref{cohen1973estimation} has more reviews of the different estimators of the exponential distribution.

\subsection{Finding $\widehat{\theta}$}

Depending on the duration of the observed trajectory, not all transitions between pairs of holding classes will be observed. Therefore, it may be that not $\widehat{q}(\mathcal{H}_1, \mathcal{H}_1), \ldots \widehat{q}(\mathcal{H}_K, \mathcal{H}_K)$ can be estimated. Therefore~\eqref{eq:linearsystem} may be overdetermined or underdetermined. We can account for the differences in estimation quality of each term by using the inverse sample variance. We can estimate $\theta$ by solving either the weighted least squares problem 
\begin{equation}\label{eq:wls}
\widehat{\theta} = \arg \min_{\theta} ||W^{\frac{1}{2}}(F\theta - [\widehat{q}(\mathcal{H}_1, \mathcal{H}_1), \ldots, \widehat{q}(\mathcal{H}_K, \mathcal{H}_K)]^T)||_2,
\end{equation}
or the weighted least deviation problem.
\begin{equation}\label{eq:lad}
\widehat{\theta} = \arg \min_{\theta} ||W^{\frac{1}{2}}(F\theta - [\widehat{q}(\mathcal{H}_1, \mathcal{H}_1), \ldots, \widehat{q}(\mathcal{H}_K, \mathcal{H}_K)]^T)||_1.
\end{equation}
The matrix $W$ is a diagonal weight matrix. The weighted least squares problem has a closed-form solution. The weighted least deviation problem can be solved numerically. Algorithm~\ref{alg} summarizes the approach to estimate $\theta$ from a finite-duration trajectory $\Sigma$. 

\begin{algorithm}[ht]
\SetAlgoLined
\KwResult{$\widehat{\theta}$}
Given continuous-time trajectory $\Sigma = \{x(t_0), x(t_1), x(t_2), \ldots, x(t_{M-1})$, find the holding class for each $X(t_i)$: $\mathcal{H}= \{ \mathcal{H}_1, \mathcal{H}_2, \ldots \mathcal{H}_m\}, m \le K$ to form the matrix $F$, which would be $m \times b$.

\For{$\mathcal{H}_i \in \mathcal{H}$}{
Estimate $\widehat{q}(\mathcal{H}_i, \mathcal{H}_i)$ from the given observations using equation~\eqref{eq:MLEdiagonaltransitionrate2}. Estimate the corresponding sample variance $var(\widehat{q}(\mathcal{H}_i, \mathcal{H}_i))$ to determine the diagonal entries of $W$. 


%
}
Estimate $\widehat{\theta}$ by solving the weighted least squares~\eqref{eq:wls} or weighted least deviation~\eqref{eq:lad}.  

\caption{Estimate $\theta$ from continuous-time trajectory $\Sigma$}\label{alg}
\end{algorithm}


\section{Contact Process}\label{sec:contact}

Interacting particle systems (IPS) models random interactions amongst $N$ particles (this paper assumes that $N$ is finite). Without interactions, the system would consist of $N$ independent continuous-time Markov processes. With interactions, the dynamic of 
dynamics of the particles becomes coupled and the evolution of the individual particles loses their Markovian property. 

Typically, the structure of interaction is characterized by an unweighted, undirected graph $G(V,E)$, where $|V| = N$. We will call this the \emph{contact network}. The most well-known IPS model is the contact process, which models infection and healing of the particles (i.e., nodes) according to the network-based susceptible-infected-susceptible (SIS) epidemics framework. IPS models such as the reversible contact process and the dynamic bond percolation processes have also been proposed and studied \cite{PhysRevE.86.016116, JZhang}. 

A Markov state $i \in \mathcal{D} = \{1, 2, \ldots D\}$ is a vector whose components describes the state of each node; we will refer to the state as a \emph{configuration} when we wish to emphasize the graphical nature of the system. The contact process assumes that each node can be in one of two states, $\{0,1\}$, representing healthy or infected state respectively. At time $t$, the network configuration is
\[
\mathbf{x}(t) = [x_1(t), x_2(t), \ldots x_N(t)]^T, \text{ where } x_i(t) = \{0,1\}.
\]
We see then that $D = 2^N$. The size of the state space is exponential in the number of nodes in the contact network. 

The contact process assumes that multiple nodes can not change states simultaneously. There are two types of transitions: 1) healing of infected agents and 2) infection of susceptible agents.

\begin{enumerate}
\item
Consider a configuration
\[ 
\mathbf{x} = [x_1,x_2, \ldots, x_j = 1, x_{j+1}, \ldots x_N]^T.
\] 
Let $T^-_j\mathbf{x}$ be the configuration where the $j$th node heals: 
\[
T^-_j\mathbf{x} =  [x_1,x_2, \ldots, x_j = 0, x_{j+1}, \ldots x_N]^T.
\] 
The transition rate from state $\mathbf{x}$ to $T^-_j\mathbf{x}$ is
\begin{equation}\label{eq:contactheal}
q(\mathbf{x}, T^-_j\mathbf{x}) = \mu,
\end{equation}
where $\mu \ge 0$ is the \emph{healing rate} (num. of healing events/unit time). If infected nodes can not heal, then $\mu = 0$.

\item Consider a configuration 
\[
\mathbf{x} = [x_1,x_2, \ldots, x_{k-1}, x_k = 0, \ldots x_N]^T.
\]
Let $T^+_k\mathbf{x}$ be the configuration where the $k$th node becomes infected:
\[
T^+_k\mathbf{x} = [x_1,x_2, \ldots, x_{k-1}, x_k = 1, \ldots x_N]^T.
\]
The transition rate from state $\mathbf{x}$ to $T^+_k\mathbf{x}$ is
\begin{equation}\label{eq:contactinfect}
q(\mathbf{x}, T^+_k\mathbf{x}) = \beta + \delta m_k, 
\end{equation}
where $m_k$ is the number of infected neighbors of node $k$ in configuration $\mathbf{x}$. Let $A = [A_{ik}]$ be the adjacency matrix of the contact network $G(V,E)$, then
\begin{equation}\label{eq:numinfneighbors}
m_k = \sum_{i =1}^N x_iA_{ik}.
\end{equation}

The parameter $\beta \ge 0$ is the \emph{exogeneous infection rate} (num. of infection from outside the network/unit time) and $\delta \ge 0$ is the \emph{endogeneous infection rate} (num. of contagion from infected neighbors/unit time). When $\delta> 0$, the infection rate of the contact process is linearly dependent on the number of infected neighbors. If susceptible nodes can only be infected by neighboring infected nodes (i.e., there can be no infection from outside the network), then $\beta= 0$.
\end{enumerate}

While the transition rate matrix, $Q$, of the contact process is $2^N \times 2^N$, we see can see from~\eqref{eq:contactheal} and~\eqref{eq:contactinfect} that the transition rates are linear functions of only three parameters: healing rate, $\mu$, exogeneous infection rate, $\beta$, and endogeneous infection rate, $\delta$.

\subsection{Upperbound on the Number of Holding Classes}

\begin{lemma}\label{prop:1}
Let $\mathcal{S}(\mathbf{x}) \subset V$ denote the set of susceptible nodes in a configuration $\mathbf{x}$. For the contact process, two different configurations $\mathbf{x}$ and $\mathbf{x}'$ belongs to the same holding class, $\mathcal{H}$, if and only if
\begin{equation}\label{eq:contactcond1}
| \mathcal{S}(\mathbf{x}) | = |\mathcal{S}(\mathbf{x}') |, \text{ and}
\end{equation}
\begin{equation}\label{eq:contactcond2}
\sum_{k \in \mathcal{S}(\mathbf{x})} m_k = \sum_{k \in \mathcal{S}(\mathbf{x}')} m_k,
\end{equation}
where $m_k$ is the total number of infected neighbors of node $k$. For the contact process, the holding classes are determined by the total number of susceptible nodes and the sum of infected neighbors of all the susceptible nodes.  
\end{lemma}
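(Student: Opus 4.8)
The plan is to compute the total holding rate $q(\mathbf{x},\mathbf{x})$ explicitly as a linear function of $\theta = [\mu,\beta,\delta]^T$ and then read off when two configurations produce the same function. First I would enumerate the outgoing transitions from a configuration $\mathbf{x}$. By the dynamics in~\eqref{eq:contactheal} and~\eqref{eq:contactinfect}, exactly two kinds of single-node flips are possible: each infected node heals at rate $\mu$, and each susceptible node $k$ becomes infected at rate $\beta + \delta m_k$. Summing these over all nodes and invoking~\eqref{eq:diagonal2} gives
\[
q(\mathbf{x},\mathbf{x}) = \mu\bigl(N - |\mathcal{S}(\mathbf{x})|\bigr) + \sum_{k \in \mathcal{S}(\mathbf{x})}\bigl(\beta + \delta m_k\bigr) = \mu\bigl(N - |\mathcal{S}(\mathbf{x})|\bigr) + \beta\,|\mathcal{S}(\mathbf{x})| + \delta\!\!\sum_{k\in\mathcal{S}(\mathbf{x})}\! m_k,
\]
where I have used that the number of infected nodes equals $N - |\mathcal{S}(\mathbf{x})|$.

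This identifies the row of $F$ associated with $\mathbf{x}$ as $\bigl[\,N-|\mathcal{S}(\mathbf{x})|,\ |\mathcal{S}(\mathbf{x})|,\ \sum_{k\in\mathcal{S}(\mathbf{x})} m_k\,\bigr]$ acting on $\theta$. For the backward (``only if'') direction I would use the fact, central to the estimation setup of~\eqref{eq:linearsystem}, that membership in a holding class must be a structural property holding for every admissible value of the unknown $\theta$: two configurations lie in the same class precisely when their holding rates agree as functions of $(\mu,\beta,\delta)$, i.e.\ when they share the same row of $F$. Equating rows forces the coefficients to match term by term, yielding $|\mathcal{S}(\mathbf{x})| = |\mathcal{S}(\mathbf{x}')|$ (from both the $\mu$ and $\beta$ coefficients) and $\sum_{k\in\mathcal{S}(\mathbf{x})} m_k = \sum_{k\in\mathcal{S}(\mathbf{x}')} m_k$ (from the $\delta$ coefficient), which are exactly~\eqref{eq:contactcond1} and~\eqref{eq:contactcond2}. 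The forward (``if'') direction is then immediate: substituting~\eqref{eq:contactcond1} and~\eqref{eq:contactcond2} into the holding-rate formula makes $q(\mathbf{x},\mathbf{x})$ and $q(\mathbf{x}',\mathbf{x}')$ identical.

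I expect the only genuinely delicate point to be the backward direction, and specifically what ``same holding class'' is taken to mean. At a single fixed parameter triple the scalar identity $(\beta-\mu)(s-s') + \delta(M-M') = 0$, writing $s=|\mathcal{S}|$ and $M=\sum_{k\in\mathcal{S}} m_k$, can be satisfied by distinct pairs $(s,M)$ and $(s',M')$, so the two conditions are not forced by a single numerical coincidence of rates. The clean way to rule this out is to insist on parameter-independent equality, which is the relevant notion here since $\theta$ is unknown and the whole purpose of~\eqref{eq:linearsystem} is that each holding class contributes one fixed row of $F$; under that reading the injectivity of the map $(s,M)\mapsto[\,N-s,\,s,\,M\,]$ delivers the result directly. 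I would make this interpretation explicit at the outset of the proof to avoid any ambiguity.
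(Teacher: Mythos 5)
Your proposal is correct, and its core computation is the same one the paper uses: expand $q(\mathbf{x},\mathbf{x})$ via~\eqref{eq:diagonal2} into $\mu\bigl(N-|\mathcal{S}(\mathbf{x})|\bigr) + \beta|\mathcal{S}(\mathbf{x})| + \delta\sum_{k\in\mathcal{S}(\mathbf{x})} m_k$ and compare the two configurations term by term. Where you differ is that you actually engage with the converse. The paper's proof, despite the lemma being stated as an ``if and only if,'' only verifies the ``if'' direction --- it shows that~\eqref{eq:contactcond1} and~\eqref{eq:contactcond2} imply equal diagonal entries and stops there. You correctly observe that under Definition~\ref{def:holdingclass} (numerical equality $q(i,i)=q(j,j)$ at the one true but unknown $\theta$), the ``only if'' direction is genuinely false for degenerate parameter values: writing $s=|\mathcal{S}(\mathbf{x})|$ and $M=\sum_{k\in\mathcal{S}(\mathbf{x})}m_k$, the identity $(\beta-\mu)(s-s')+\delta(M-M')=0$ can hold with $(s,M)\neq(s',M')$, e.g.\ whenever $\mu=\beta$ and $M=M'$, or whenever $\delta=0$ and $s=s'$. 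Your fix --- reading ``same holding class'' as equality of the holding rate as a function of $\theta$, i.e.\ identical rows of $F$, so that injectivity of $(s,M)\mapsto[\,N-s,\;s,\;M\,]$ gives the converse --- is the right repair and is consistent with how the lemma is actually used in building~\eqref{eq:linearsystem}. In short, your argument is more complete than the paper's own proof; the only caveat is that the parameter-independent reading is your (reasonable) interpolation, not something the paper states.
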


\begin{proof}
For two Markov states, corresponding to configurations $\mathbf{x}$ and $\mathbf{x}'$, the diagonal entries of the transition rate matrix $Q$ are 
\begin{align*}
&q(\mathbf{x}, \mathbf{x}) = -\sum_{\tilde{\mathbf{x}} \in \mathcal{D}} q(\mathbf{x}, \tilde{\mathbf{x}})\\
&= -\left((N- | \mathcal{S}(\mathbf{x}) |)\mu + (| \mathcal{S}(\mathbf{x}) |) \beta +   \left(\sum_{k \in \mathcal{S}(\mathbf{x})} m_k\right)\delta \right)
\end{align*}
and
\begin{align*}
&q(\mathbf{x}', \mathbf{x}') = -\sum_{\tilde{\mathbf{x}} \in \mathcal{D}} q(\mathbf{x}', \tilde{\mathbf{x}})\\
&= -\left((N- | \mathcal{S}(\mathbf{x}') |)\mu + (| \mathcal{S}(\mathbf{x}') |) \beta +   \left(\sum_{k \in \mathcal{S}(\mathbf{x}')} m_k\right)\delta \right).
\end{align*}

When equations~\eqref{eq:contactcond1} and~\eqref{eq:contactcond2} are true, then $q(\mathbf{x}, \mathbf{x}) = q(\mathbf{x}', \mathbf{x}')$. By definition~\ref{def:holdingclass}, Markov states $\mathbf{x}
$ and $\mathbf{x}'$ belongs to the same holding class, $\mathcal{H}$.
\end{proof}

Using Preposition~\ref{prop:1}, we can partition the $2^N$-sized state space of the contact process into holding classes $\{\mathcal{H}_1 \cup \mathcal{H}_2 \cup \ldots  \cup\mathcal{H}_K\}$. The total number of holding classes, $K$, depends on \emph{both} the infection/healing rates and the structure of the contact network $G(V,E)$. It is not easy to find $K$ without enumerating over all $2^N$ possible configurations. However, we can derive an upperbound on $K$ to see that it is much smaller than $2^N$.


\begin{theorem}\label{thm:1}
For a contact process with contact network $G(V,E), |V| = N$ and dynamics parameter $\theta$, the number of holding class 
\[
K \le \frac{(N+1)(N^2 - N +6)}{6}.
\]
\end{theorem}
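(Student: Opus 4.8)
The plan is to reduce the count of holding classes to a count of a pair of integer statistics, and then bound that count by an elementary sum. By Lemma~\ref{prop:1}, the holding class of a configuration is determined entirely by the two quantities $s := |\mathcal{S}(\mathbf{x})|$ and $w := \sum_{k \in \mathcal{S}(\mathbf{x})} m_k$. Consequently the number of distinct holding classes $K$ is at most the number of distinct realizable pairs $(s,w)$, so I would bound $K$ by counting, for each admissible value of $s$, how many values $w$ can take.

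First I would reinterpret $w$ graphically. Since $m_k$ counts the infected neighbors of a susceptible node $k$, the sum $w = \sum_{k \in \mathcal{S}(\mathbf{x})} m_k$ is exactly the number of edges of $G$ joining a susceptible node to an infected node. With $s$ susceptible and $N-s$ infected nodes, this count is a nonnegative integer that is no larger than the number of edges in the complete bipartite graph $K_{s,\,N-s}$, namely $s(N-s)$; for any fixed $G$ the number of such cross edges can only be smaller. Hence for each $s \in \{0,1,\dots,N\}$ the variable $w$ takes at most the $s(N-s)+1$ integer values $0,1,\dots,s(N-s)$. Note this formula automatically handles the degenerate cases $s=0$ and $s=N$, where $w$ must equal $0$.

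Summing over all values of $s$ then gives
\[
K \le \sum_{s=0}^{N}\bigl(s(N-s)+1\bigr) = \left(\sum_{s=0}^{N} s(N-s)\right) + (N+1).
\]
The rest is a routine evaluation: using $\sum_{s=0}^{N} s = \tfrac{N(N+1)}{2}$ and $\sum_{s=0}^{N} s^2 = \tfrac{N(N+1)(2N+1)}{6}$ yields $\sum_{s=0}^{N} s(N-s) = \tfrac{(N-1)N(N+1)}{6}$, and adding $N+1$ and factoring out $(N+1)$ produces the claimed bound $\tfrac{(N+1)(N^2-N+6)}{6}$.

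I expect the only genuinely substantive step to be the graphical reinterpretation of $w$ together with the uniform, graph-independent bound $w \le s(N-s)$; everything after that is overcounting followed by an elementary summation. The estimate is deliberately loose, since it ignores that many integer values of $w$ may be unrealizable for a particular $G$ and that $s(N-s)$ is attained only for near-complete-bipartite configurations. This looseness is harmless here and is precisely what makes the final bound polynomial, rather than exponential, in $N$.
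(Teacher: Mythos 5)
Your proposal is correct and follows essentially the same route as the paper: both reduce the count to the number of realizable pairs $(s,w)$ via Lemma~\ref{prop:1}, bound $w$ by $s(N-s)$ for each $s$, and evaluate $\sum_{s=0}^{N}\bigl(s(N-s)+1\bigr)$. Your cut-edge interpretation of $w$ is a slightly cleaner justification of the bound $w \le s(N-s)$ than the paper's per-node bound $m_k \le N-s$, but the argument is otherwise identical.
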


\begin{proof}
Consider Lemma~\ref{prop:1}. If the holding class is only determined by~\eqref{eq:contactcond1}, then there would be $N+1$ different holding classes corresponding to $|\mathcal{S}(\mathbf{x})| = 0, 1, \ldots N$. It is intuitive that the number of holding classes should be much larger than $O(N)$. Therefore, the number of holding classes, $K$, is determined by how many possible unique values $\sum_{k \in \mathcal{S}(\mathbf{x})} m_k$ may take for all $\mathbf{x} \in \mathcal{D}$. 

Assuming that there are $|\mathcal{S}(\mathbf{x})|$ susceptible nodes in a configuration $\mathbf{x}$, this means that there are $N- |\mathcal{S}(\mathbf{x})|$ total number of infected nodes. Without any additional knowledge of the structure of $G(V,E)$, we can conclude that that for any node $k \in \mathcal{S}(\mathbf{x})$, $m_k$ can be take on \emph{at most} $N- |\mathcal{S}(\mathbf{x})| +1 $ different values (i.e.,$\{0, 1, 2, \ldots N- |\mathcal{S}(\mathbf{x})|\}$). Then,




Let $|\mathcal{S}(\mathbf{x})| = s$, summing over all possible values of $s$ from $0$ to $N$,
\begin{align}\label{eq:contactsum}
\sum_{s=0}^N &s(N-s) +1 = N\sum_{s=0}^N s -  \sum_{s=0}^N s^2 +  \sum_{s=0}^N 1\\
&\medmath{= N\left(\frac{N(N+1)}{2}\right) - \frac{N(N+1)(2N+1)}{6} +(N+1)}\\
&\medmath{=\frac{(N+1)(N^2 - N +6)}{6}}.
\end{align}
Therefore
\[
 K \le \frac{(N+1)(N^2 - N +6)}{6}.
\]

\end{proof}

Theorem~\ref{thm:1} shows that the number of holding classes can not be more than cubic in the number of nodes, $N$. This makes $K$ much smaller than $D = 2^N$. 

 \begin{corollary}
For a contact process with contact network $G(V,E), |V| = N$ and dynamics parameter $\theta$, if we know that $G(V,E)$ is a degree-bounded graph such that all the nodal degrees are less than or equal to $d_{\max}$, then the number of holding class 
\begin{equation}\label{eq:upperbound}
K \le \sum_{s = 0}^N s(\min(N-s, d_{\max}))+1 \le \frac{(N+1)(N^2 - N +6)}{6}.
\end{equation}

 \end{corollary}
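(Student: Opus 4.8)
The plan is to re-run the counting argument from the proof of Theorem~\ref{thm:1}, replacing its crude per-node bound on $m_k$ with a sharper one that exploits the degree constraint. By Lemma~\ref{prop:1}, a holding class is completely determined by the pair $\left(s, \sum_{k \in \mathcal{S}(\mathbf{x})} m_k\right)$ with $s = |\mathcal{S}(\mathbf{x})|$, so it suffices to bound, for each admissible $s \in \{0,1,\ldots,N\}$, the number of distinct values that $\sum_{k \in \mathcal{S}(\mathbf{x})} m_k$ can attain, and then sum over $s$.

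First I would fix $s$. A configuration with $s$ susceptible nodes has exactly $N-s$ infected nodes, so a priori $m_k \le N-s$ for every susceptible node $k$. The new ingredient is that $m_k$ is simultaneously capped by the degree of $k$: from $m_k = \sum_{i=1}^N x_i A_{ik}$ in~\eqref{eq:numinfneighbors} and $x_i \in \{0,1\}$ we get $m_k \le \sum_{i=1}^N A_{ik} = \deg(k) \le d_{\max}$. Combining the two caps gives $m_k \le \min(N-s, d_{\max})$ for each $k \in \mathcal{S}(\mathbf{x})$.

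Next, summing over the $s$ susceptible nodes, $\sum_{k \in \mathcal{S}(\mathbf{x})} m_k$ is a nonnegative integer between $0$ and $s\cdot\min(N-s, d_{\max})$, hence it takes at most $s\cdot\min(N-s, d_{\max})+1$ distinct values (the $+1$ accounting for the value $0$). Summing this per-$s$ count over $s = 0,\ldots,N$ produces the first inequality in~\eqref{eq:upperbound}.

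Finally, the second inequality is immediate by monotonicity: $\min(N-s, d_{\max}) \le N-s$ for every $s$, so the degree-bounded total is dominated termwise by $\sum_{s=0}^N \left[s(N-s)+1\right]$, which is exactly the sum evaluated in the proof of Theorem~\ref{thm:1} and equals $\frac{(N+1)(N^2-N+6)}{6}$. I do not anticipate a genuine obstacle here; the only points needing care are recognizing that either cap in $\min(N-s, d_{\max})$ can be the binding one depending on $s$ (for small $s$ the infected-count cap $N-s$ dominates, for large $s$ the degree cap $d_{\max}$ does), and including the additive $+1$ once per value of $s$ so that the bound specializes consistently to Theorem~\ref{thm:1} when $d_{\max} \ge N$.
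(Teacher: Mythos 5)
Your proposal is correct and follows essentially the same route as the paper's own proof: fix $s=|\mathcal{S}(\mathbf{x})|$, tighten the per-node cap on $m_k$ from $N-s$ to $\min(N-s,d_{\max})$ using the degree bound, count the at most $s\cdot\min(N-s,d_{\max})+1$ possible values of $\sum_{k\in\mathcal{S}(\mathbf{x})}m_k$, and sum over $s$. Your write-up is in fact more explicit than the paper's (which simply cites the proof of Theorem~\ref{thm:1}), correctly placing the $+1$ inside the sum over $s$ and justifying the second inequality by termwise domination.
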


\begin{proof}
From the proof of Theorem~\ref{thm:1}, for a given number of susceptible nodes, $s$, we assumed that the number of infected neighbors of a susceptible node can range from $0, 1, \ldots, N-s$. When the contact network has bounded degree, we know that the number of infected neighbors of a susceptible node can range from $0,1, \ldots, \min(N-s, d_{\max})$. Therefore, \eqref{eq:contactsum} becomes
\begin{align*}
\sum_{s=0}^N s(\min(N-s, d_{\max})) + 1.
\end{align*}
\end{proof}

 \begin{lemma}
For a contact process with contact network $G(V,E), |V| = N$ and dynamics parameter $\theta$, if we know that $G(V,E)$ is a complete graph, then the number of holding class 
\[
K = N+1.
\]
 \end{lemma}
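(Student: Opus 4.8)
The plan is to invoke Lemma~\ref{prop:1} and show that, on a complete graph, the two quantities that label a holding class, namely $|\mathcal{S}(\mathbf{x})|$ and $\sum_{k \in \mathcal{S}(\mathbf{x})} m_k$, collapse into a single free parameter: the number of susceptible nodes $s = |\mathcal{S}(\mathbf{x})|$. Once this collapse is established, counting the admissible values of $s$ immediately yields the bound $K \le N+1$, and the remaining work is to argue that distinct values of $s$ genuinely produce distinct holding classes.

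First I would compute $m_k$ for a susceptible node in a complete graph. Since $G$ is complete, each node is adjacent to all $N-1$ other nodes; hence every infected node is a neighbor of $k$, and $m_k$ equals the total number of infected nodes. If a configuration $\mathbf{x}$ has $s$ susceptible nodes, it has $N-s$ infected nodes, so $m_k = N-s$ for \emph{every} $k \in \mathcal{S}(\mathbf{x})$. This is the crux of the argument: in the complete-graph case $m_k$ is independent of $k$ and is a function of $s$ alone. Next I would evaluate the second characterizing quantity,
\[
\sum_{k \in \mathcal{S}(\mathbf{x})} m_k = \sum_{k \in \mathcal{S}(\mathbf{x})} (N-s) = s(N-s),
\]
which is fully determined by $s$. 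Consequently the labelling pair $\big(|\mathcal{S}(\mathbf{x})|,\, \sum_{k\in\mathcal{S}(\mathbf{x})} m_k\big) = \big(s,\, s(N-s)\big)$ used in Lemma~\ref{prop:1} is a function of $s$ only. Because $s$ ranges over $\{0,1,\ldots,N\}$ and each value is realized by some configuration, there are at most $N+1$ holding classes, giving $K \le N+1$.

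The main obstacle will be the reverse inequality $K \ge N+1$: I must verify that distinct values of $s$ actually yield distinct diagonal rates, so that no two of the $N+1$ candidate classes merge. Substituting $m_k = N-s$ into the diagonal-rate expression from the proof of Lemma~\ref{prop:1}, the rate of the class indexed by $s$ is
\[
r(s) = (N-s)\mu + s\beta + s(N-s)\delta = N\mu + (\beta - \mu + N\delta)\,s - \delta\, s^2,
\]
a quadratic in $s$. For generic $\theta = (\mu,\beta,\delta)$ this map is injective on $\{0,\ldots,N\}$, so all $N+1$ rates are distinct and $K = N+1$. The only way two classes could coincide is if two integers $s_1 \neq s_2$ in $\{0,\ldots,N\}$ are symmetric about the parabola's vertex, i.e.\ $s_1 + s_2 = (\beta - \mu + N\delta)/\delta$; excluding such degenerate parameter choices establishes the stated equality. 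I would therefore finish by noting that, apart from these measure-zero coincidences in $\theta$, the $N+1$ susceptible-count strata are in bijection with the holding classes.
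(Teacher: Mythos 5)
Your proposal is correct and follows the same route as the paper's own proof: on a complete graph every susceptible node neighbors all $N-s$ infected nodes, so $m_k = N-s$ for each $k \in \mathcal{S}(\mathbf{x})$ and $\sum_{k \in \mathcal{S}(\mathbf{x})} m_k = s(N-s)$, whence the pair of invariants from Lemma~\ref{prop:1} collapses to the single parameter $s \in \{0,1,\ldots,N\}$. Where you genuinely go beyond the paper is the final injectivity check. The paper stops at ``$|\mathcal{S}(\mathbf{x})|$ can be $0,1,\ldots,N$, so $K=N+1$,'' which under Definition~\ref{def:holdingclass} really only delivers $K \le N+1$: since holding classes are defined by equality of the diagonal rates, two distinct values $s_1 \neq s_2$ merge into one class whenever $r(s_1)=r(s_2)$, i.e.\ whenever $s_1+s_2 = (\beta-\mu+N\delta)/\delta$ (or $\beta=\mu$ when $\delta=0$, in which case $r$ can even be constant and $K$ drops to $N+1$ minus the number of such collisions, down to $1$). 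Your explicit computation of $r(s) = N\mu + (\beta-\mu+N\delta)s - \delta s^2$ and the observation that the stated equality holds only for non-degenerate $\theta$ is a step the paper omits, and it is needed: as written, the lemma is false for those measure-zero parameter choices. So your argument is not a different decomposition, but it is the more complete version of the same one.
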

\begin{proof}
Let $\mathbf{x}$ and $\mathbf{x}'$ denote any two configurations such that $|\mathcal{S}(\mathbf{x})| = |\mathcal{S}(\mathbf{x}')|$. Then we know that 
\[
\sum_{s \in \mathcal{S}(\mathbf{x})} m_s = \sum_{s \in \mathcal{S}(\mathbf{x'})} m_s = s(N-s), \forall \mathbf{x}, \mathbf{x}'.
\]
Since $|\mathcal{S}(\mathbf{x})|$ can be $0,1,\dots N$, the number of holding classes is $N+1$.

\end{proof}

\section{Reversible Contact Process}\label{sec:rev}

In the contact process, we see that the infection rate of a susceptible node is linearly dependent on the number infected neighbors. In~\cite{Zhang2014}, a process similar to the contact process (called scaled SIS process) was analyzed. In this model, the infection rate is exponentially dependent on the number of infected neighbors. This modification made the underlying continuous-time Markov process a \emph{reversible} process: a stochastic process that is statistically the same forward and backward in time. Additionally, the equilibrium distribution of the reversible contact process can be derived in closed-form. 

The transition rates of the reversible contact process are
\begin{enumerate}
\item 
\begin{equation}\label{eq:scaledheal}
q(\mathbf{x}, T^-_j\mathbf{x}) = \mu,
\end{equation}
where $\mu > 0$ is the \emph{healing rate} (\# of healing events/unit time). The reversible contact process can not have $\mu = 0$, or the reversibility property would be lost. 

\item 
\begin{equation}\label{eq:scaledinfect}
q(\mathbf{x}, T^+_k\mathbf{x}) = \beta(\delta)^{m_k}, 
\end{equation}
where $m_k$ is the number of infected neighbors of node $k$ in configuration $\mathbf{x}$. Like in the contact process, we can think of $\beta >0$ as the exogeneous infection rate (\# of infected events/unit). Unlike the contact process, $\delta$ is not a rate but a unitless factor. We see that the number of infected neighbors, $m_k$, induces a scaling of the infection rate from $\beta$. This scaling has increases the infection rate with increasing $m_k$ if $\delta > 1$ and decreases infection rate if $\delta < 1$. The reversible contact process can not have $\beta =0$ or the reversibility property would be lost. 

\end{enumerate}

The transition rate matrix, $Q$, of the reversible contact process is $2^N \times 2^N$. We can see from~\eqref{eq:scaledheal} and~\eqref{eq:scaledinfect} that unlike the contact process, the transition rates are not linear functions of $\mu, \beta, \delta$. Instead, the transition rates can be written as functions of $\theta = [\mu, \beta, \beta\delta, \beta\delta^2, \ldots \beta\delta^{\text{dmax}}]^T$, where $dmax$ is the maximum degree of $G(V,E)$. A simple additional step is needed after Algorithm~\ref{alg} to compute $\delta$ from $\beta\delta, \ldots \beta\delta^{\text{dmax}}$, but this may induce additional errors.

\subsection{Upperbound on the Number of Holding Classes}



\begin{lemma}\label{lemmascaled}
Let $\mathcal{S}(\mathbf{x}) \subset V$ denote the set of susceptible nodes in a configuration $\mathbf{x}$. For the reversible contact process, two different configurations $\mathbf{x}$ and $\mathbf{x}'$ belongs to the same holding class if and only if
\begin{equation}\label{eq:contactcond3}
| \mathcal{S}(\mathbf{x}) | = |\mathcal{S}(\mathbf{x}') |, \text{ and}
\end{equation}
\begin{equation}\label{eq:contactcond4}
\{m_k: k \in S(\mathbf{x})\} = \{m_k: k \in S(\mathbf{x}')\},
\end{equation}
where $m_k$ is the number of infected neighbors of node $k$. The set $\{m_k: k \in S(\mathbf{x})\}$ is the set of the number of infected neighbors of each susceptible node in configuration $\mathbf{x}$. Equality in~\eqref{eq:contactcond4} is set equality; the ordering of the values do not matter.
\end{lemma}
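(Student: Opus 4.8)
The plan is to mirror the proof of Lemma~\ref{prop:1}, adapting it to the fact that the reversible contact process has infection rates that depend \emph{nonlinearly} on $m_k$. First I would write out the diagonal entry of $Q$ for a configuration $\mathbf{x}$. By~\eqref{eq:scaledheal} there are $N - |\mathcal{S}(\mathbf{x})|$ infected nodes each leaving $\mathbf{x}$ at rate $\mu$, and by~\eqref{eq:scaledinfect} each susceptible node $k$ leaves at rate $\beta\delta^{m_k}$, so that
\[
q(\mathbf{x},\mathbf{x}) = -\left( \big(N - |\mathcal{S}(\mathbf{x})|\big)\mu + \sum_{k \in \mathcal{S}(\mathbf{x})} \beta\,\delta^{m_k} \right),
\]
and analogously for $\mathbf{x}'$. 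The central idea is then to regroup the infection sum by the value of $m_k$: letting $c_j(\mathbf{x}) = |\{k \in \mathcal{S}(\mathbf{x}) : m_k = j\}|$ count the susceptible nodes with exactly $j$ infected neighbors, we have $\sum_{k \in \mathcal{S}(\mathbf{x})} \beta\delta^{m_k} = \sum_{j=0}^{d_{\max}} c_j(\mathbf{x})\,\beta\delta^{j}$, so $q(\mathbf{x},\mathbf{x})$ is a \emph{linear} form in the reparametrized vector $\theta = [\mu,\beta,\beta\delta,\dots,\beta\delta^{d_{\max}}]^T$, with coefficient $-(N-|\mathcal{S}(\mathbf{x})|)$ on $\mu$ and $-c_j(\mathbf{x})$ on $\beta\delta^{j}$.

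For the ``if'' direction, conditions~\eqref{eq:contactcond3} and~\eqref{eq:contactcond4} state exactly that $|\mathcal{S}(\mathbf{x})| = |\mathcal{S}(\mathbf{x}')|$ and that the two configurations have identical counts $c_j(\mathbf{x}) = c_j(\mathbf{x}')$ for every $j$ --- this is what the ordering-independent equality of $\{m_k\}$ in~\eqref{eq:contactcond4} encodes. Hence the two linear forms agree coefficient by coefficient, $q(\mathbf{x},\mathbf{x}) = q(\mathbf{x}',\mathbf{x}')$, and by Definition~\ref{def:holdingclass} the configurations share a holding class. This half is routine once the regrouping is in place.

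The ``only if'' direction is where the main obstacle lies, and I would handle it by matching coefficients. Equality of holding rates $q(\mathbf{x},\mathbf{x}) = q(\mathbf{x}',\mathbf{x}')$, read as an identity across the admissible parameter values, forces equality of the two linear forms; since the monomials $\mu,\beta,\beta\delta,\dots,\beta\delta^{d_{\max}}$ are linearly independent as functions of $(\mu,\beta,\delta)$ (the powers $\delta^0,\dots,\delta^{d_{\max}}$ form a polynomial basis), all coefficients must coincide, giving $|\mathcal{S}(\mathbf{x})| = |\mathcal{S}(\mathbf{x}')|$ and $c_j(\mathbf{x}) = c_j(\mathbf{x}')$ for all $j$, i.e.~\eqref{eq:contactcond3}--\eqref{eq:contactcond4}. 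The delicate point is that this linear independence fails at degenerate parameter values, most notably $\delta = 1$, where every $\beta\delta^{j}$ collapses to $\beta$ and only the number of susceptibles matters; so the ``only if'' statement should be understood as holding for all (equivalently, generic) parameter values, not for a single coincidental choice. I would also be careful that~\eqref{eq:contactcond4} really requires \emph{multiset} equality of the $m_k$-values rather than plain set equality: two configurations with the same number of susceptibles and the same set $\{m_k\}$ but different multiplicities (say $m_k$-value counts corresponding to $\{0,0,1\}$ versus $\{0,1,1\}$) yield different coefficient vectors and hence distinct holding rates, so it is the multiplicities $c_j$ that must match.
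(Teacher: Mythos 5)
Your ``if'' direction is essentially the paper's own proof: the paper likewise expands $q(\mathbf{x},\mathbf{x})$ by grouping susceptible nodes according to their number of infected neighbors (your counts $c_j(\mathbf{x}) = |\{k \in \mathcal{S}(\mathbf{x}): m_k = j\}|$ appear there as $|\{s \in S(\mathbf{x}): m_s = j\}|$), observes that the diagonal entry is a linear form in $[\mu, \beta, \beta\delta, \ldots, \beta\delta^{d_{\max}}]^T$, and concludes equality of holding rates from \eqref{eq:contactcond3} and \eqref{eq:contactcond4}. Where you go beyond the paper is the ``only if'' direction: despite the lemma being stated as an equivalence, the paper's proof (like its proof of Lemma~\ref{prop:1}) only establishes sufficiency. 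Your coefficient-matching argument via linear independence of $\mu, \beta, \beta\delta, \ldots, \beta\delta^{d_{\max}}$ is the right way to close that gap, and your caveat is well placed --- for a \emph{fixed} degenerate parameter choice such as $\delta = 1$ (or $\mu = \beta$, etc.) distinct coefficient vectors can produce coincidentally equal holding rates, so the converse only holds generically or when read as an identity in $\theta$; Definition~\ref{def:holdingclass} as written is for a fixed $Q$, so strictly the ``only if'' claim needs this qualification. Your second observation is also correct and worth stating explicitly: \eqref{eq:contactcond4} must be read as \emph{multiset} equality, since configurations with $m_k$-values $\{0,0,1\}$ versus $\{0,1,1\}$ have the same underlying set but different counts $c_j$ and hence different holding rates; the paper's phrase ``the ordering of the values do not matter'' suggests this is the intended reading, but ``set equality'' is technically the wrong term. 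In short, your proposal matches the paper's argument where the paper gives one and correctly repairs two points the paper leaves loose.
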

\begin{proof}
For two Markov states, corresponding to configurations $\mathbf{x}$ and $\mathbf{x}'$, the diagonal entries of the transition rate matrix $Q$ are 
\begin{align*}
&q(\mathbf{x}, \mathbf{x}) = -\sum_{\tilde{\mathbf{x}} \in \mathcal{D}} q(\mathbf{x}, \tilde{\mathbf{x}})\\
&\medmath{= -((N- | \mathcal{S}(\mathbf{x}) |)\mu +  |{ s \in S(\mathbf{x}): m_s = 0 }|(\beta)}+\\
&\medmath{ |\{s \in S(\mathbf{x}): m_s = 1\}|(\beta\delta) \ldots + |\{s \in S(\mathbf{x}): m_s = d_{\max}\}|(\beta\delta^{d_{\max}}))}, 
\end{align*}
and
\begin{align*}
&q(\mathbf{x}', \mathbf{x}') = -\sum_{\tilde{\mathbf{x}} \in \mathcal{D}} q(\mathbf{x}', \tilde{\mathbf{x}})\\
&= \medmath{-((N- | \mathcal{S}(\mathbf{x}') |)\mu +  (|{ s \in S(\mathbf{x}'): m_s = 0 }|)\beta}+\\
&\medmath{(|\{s \in S(\mathbf{x}'): m_s = 1\}|)\beta\delta \ldots + (|\{s \in S(\mathbf{x}'): m_s = d_{\max}\}|)\beta\delta^{d_{\max}}}.
\end{align*}

When equations~\eqref{eq:contactcond3} and \eqref{eq:contactcond4} are true, then $q(\mathbf{x}, \mathbf{x}) = q(\mathbf{x}', \mathbf{x}')$. By Definition~\ref{def:holdingclass}, Markov state $\mathbf{x}
$ and $\mathbf{x}'$ belongs to the same holding class, $\mathcal{H}$.
\end{proof}

In contrast the contact process, the holding classes of the reversible process is determined by the number of infected neighbors of each susceptible node instead of simply by the total number of infected neighbors. Consequently, the number of holding classes in the reversible process is larger than the number of holding classes in the contact process. Reversible contact processes also have a larger number of values to estimate $\theta = [\mu, \beta, \beta\delta, \beta\delta^2, \ldots \beta\delta^{\text{dmax}}]^T$. It may be useful then, to approximate the reversible process by the contact process to efficiently estimate the healing and infection rates. Reference~\ref{Zhang2017} gives some insight as to when the two processes are equivalent.

\begin{theorem}\label{thm:revtran}
For a reversible contact process with interaction network $G(V,E), |V| = N$ and dynamics parameter $\theta$, the number of holding class 
\begin{equation}\label{eq:revbound}
K \le 2^N.
\end{equation}
\end{theorem}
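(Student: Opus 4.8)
The plan is to bound $K$ by counting, configuration by configuration, the distinct objects that can determine a holding class. By Lemma~\ref{lemmascaled}, the holding class of a configuration $\mathbf{x}$ is completely determined by the number of susceptible nodes $s = |\mathcal{S}(\mathbf{x})|$ together with the multiset $\{m_k : k \in S(\mathbf{x})\}$ of infected-neighbor counts over the susceptible nodes. Since this multiset already has cardinality $s$, it suffices to count how many distinct multisets can arise; that count is an upper bound on $K$. This parallels the proof of Theorem~\ref{thm:1}, except that for the reversible process the full multiset matters rather than just its sum.

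First I would fix the number of susceptible nodes $s$. A configuration with $s$ susceptible nodes has exactly $N - s$ infected nodes, so each susceptible node $k$ satisfies $0 \le m_k \le N - s$; that is, every entry of the multiset lies in an alphabet $\{0, 1, \ldots, N-s\}$ of size $N - s + 1$. The number of multisets of size $s$ drawn from an alphabet of size $N - s + 1$ is
\[
\binom{(N-s+1) + s - 1}{s} = \binom{N}{s}.
\]
This is an over-count, because not every such multiset need be realized by an actual configuration on $G(V,E)$, but it is a valid upper bound on the number of holding classes having exactly $s$ susceptible nodes.

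Summing over all admissible values of $s$ then gives
\[
K \le \sum_{s=0}^{N} \binom{N}{s} = 2^N,
\]
where the final equality is the binomial theorem. I would note that this coincides with the trivial bound $K \le D = 2^N$ that follows simply because the holding classes partition the $2^N$-element state space; nonetheless the combinatorial route is worth spelling out, since it exposes exactly which multisets contribute and makes transparent why the reversible process admits far more holding classes than the contact process, where only the scalar $\sum_{k} m_k$ was needed.

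The only points requiring care are the alphabet bound $m_k \le N - s$ and the collapse of the multiset count $\binom{(N-s+1)+s-1}{s}$ to $\binom{N}{s}$; once these are in hand the binomial theorem closes the argument at once. There is no genuine analytic obstacle, as the partition bound already suffices. The lone bookkeeping subtlety is to confirm that the per-$s$ over-counts are genuinely upper bounds and that no holding class is double-counted across different values of $s$, which holds because condition~\eqref{eq:contactcond3} forces configurations with distinct $s$ into distinct holding classes.
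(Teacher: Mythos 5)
Your proposal is correct and follows essentially the same route as the paper's own proof: fix the number of susceptible nodes $s$, count multisets of infected-neighbor counts via combinations with replacement to get $\binom{(N-s+1)+s-1}{s} = \binom{N}{s}$, and sum over $s$ using the binomial theorem. Your added remark that this matches the trivial partition bound $K \le D = 2^N$ is a fair observation but does not change the argument.
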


\begin{proof}
If the holding class is only determined by~\eqref{eq:contactcond3}, then there would be $N+1$ different holding classes corresponding to $|\mathcal{S}(\mathbf{x})| = 0, 1, \ldots N$. It is intuitive that the number of holding classes should be much larger than $O(N)$. Therefore, the number of holding classes, $K$, is determined by the number of unique sets $\{m_k: k \in S(\mathbf{x})\}$.

Assuming that there are $|\mathcal{S}(\mathbf{x})|$ susceptible nodes in a configuration $\mathbf{x}$, this means that there are $N- |\mathcal{S}(\mathbf{x})|$ total number of infected nodes. Without any additional knowledge of the structure of $G(V,E)$, we can conclude that that $m_k$ can range in value from $\{0,1, \ldots N- |\mathcal{S}(\mathbf{x})|\}$ for any node $k \in \mathcal{S}(\mathbf{x})$. The number of possible unique set $\{m_k: k \in S(\mathbf{x})\}$ is a combination with replacement problem where we want to choose $N-|\mathcal{S}(\mathbf{x})|+1$ values for $|\mathcal{S}(\mathbf{x})|$ entires. 

Let $|\mathcal{S}(\mathbf{x})| =s$. The number of possibilities is
\[
\frac{(N-s+1+s - 1)!}{s!(N-s+1-1)!} = \frac{N!}{s!(N-s)!}.
\]
Summing over all possible number of susceptible nodes results in
\begin{equation}\label{eq:revsum}
\sum_{s=0}^N \frac{N!}{s!(N-s)!}.
\end{equation}
By the binomial theorem, we know that~\eqref{eq:revsum} is equal to $2^N$.
\end{proof}

 \begin{corollary}\label{coro:revtran}
For a reversible contact process with interaction network $G(V,E), |V| = N$ and dynamics parameter $\theta$, if we know that $G(V,E)$ is a degree-bounded graph such that all the nodal degrees are less than or equal to $d_{\max}$, then the number of holding class 
\[
K \le \sum_{s=0}^{N - d_{\max}}  \frac{(d_{\max} +s)!}{s!(d_{\max})!}  + \sum_{s = N -d_{\max}+1}^{N} \frac{N!}{s!(N-s)!}.
\]
 \end{corollary}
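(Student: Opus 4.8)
The plan is to follow the counting argument in the proof of Theorem~\ref{thm:revtran} essentially verbatim, changing only the set of values each $m_k$ is permitted to take once the degree bound $d_{\max}$ is imposed. By Lemma~\ref{lemmascaled}, a holding class of the reversible contact process is determined by the pair consisting of $|\mathcal{S}(\mathbf{x})|$ and the multiset $\{m_k : k \in \mathcal{S}(\mathbf{x})\}$. I would therefore bound $K$ by summing, over each possible susceptible-set size $s = |\mathcal{S}(\mathbf{x})|$, the number of distinct multisets of infected-neighbor counts that can arise.

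First I would fix $s$ and identify the ground set of admissible values for a single $m_k$. In the unbounded case each susceptible node may have anywhere from $0$ to $N-s$ infected neighbors. Under the degree bound, a susceptible node has at most $d_{\max}$ neighbors in total, hence at most $d_{\max}$ infected neighbors; combined with the trivial cap of $N-s$ available infected nodes, this restricts $m_k$ to the $\min(N-s,\,d_{\max})+1$ values $\{0,1,\ldots,\min(N-s,\,d_{\max})\}$. The number of distinct multisets of size $s$ drawn with repetition from this ground set is the standard combination-with-replacement count
\[
\binom{\min(N-s,\,d_{\max}) + s}{s}.
\]

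Next I would split the outer sum over $s$ at the threshold $s = N - d_{\max}$, where the two arguments of the minimum coincide. For $s \le N - d_{\max}$ we have $N-s \ge d_{\max}$, so $\min(N-s,\,d_{\max}) = d_{\max}$ and the count becomes $\frac{(d_{\max}+s)!}{s!\,d_{\max}!}$; for $s \ge N - d_{\max} + 1$ we have $N-s \le d_{\max}-1$, so $\min(N-s,\,d_{\max}) = N-s$ and the count becomes $\frac{N!}{s!(N-s)!}$, exactly as in Theorem~\ref{thm:revtran}. Summing the two pieces then yields the claimed bound. For completeness I would note that the two formulas agree at the boundary $s = N - d_{\max}$, both equaling $\frac{N!}{d_{\max}!\,(N-d_{\max})!}$, so it is immaterial that this single term is assigned to the first sum.

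The argument is mostly bookkeeping and I expect no deep obstacle. The one point requiring care is the case split: I must verify that $\min(N-s,\,d_{\max})$ switches from $d_{\max}$ to $N-s$ precisely across $s = N - d_{\max}$ and that the boundary term is counted exactly once. As with Theorem~\ref{thm:revtran}, the result is only an upper bound and not an equality, because it ignores the actual adjacency structure of $G$: not every multiset of infected-neighbor counts lying in the admissible range is necessarily realizable by some genuine configuration on a specific graph, and the degree cap may itself be loose if few nodes attain degree $d_{\max}$.
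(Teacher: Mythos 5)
Your proposal is correct and follows essentially the same route as the paper's own proof: restrict the admissible values of $m_k$ to $\{0,1,\ldots,\min(N-s,d_{\max})\}$, apply the combination-with-replacement count from Theorem~\ref{thm:revtran}, and split the sum over $s$ at $N-d_{\max}$. Your explicit check of the boundary term at $s = N-d_{\max}$ is a small addition the paper omits, but the argument is otherwise identical.
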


\begin{proof}
From the proof of Theorem~\ref{thm:revtran}, for a given number of susceptible nodes, $s$, we assumed that the number of infected neighbors of a susceptible node can range from $0, 1, \ldots, N-s$. When the interaction network has bounded degree, we know that the number of infected neighbors of a susceptible node can range from $0,1, \ldots, \min(N-s, d_{\max})$. Therefore, the sum~\eqref{eq:revsum} becomes
\begin{align*}
&\sum_{s=0}^N \frac{N!}{s!(N-s)!}\\ 
&= \medmath{\sum_{s=0}^{N - d_{\max}}  \frac{(d_{\max} +s)!}{s!(d_{\max})!}      + \sum_{N -d_{\max}+1}^{N} \frac{(N-s+1+s - 1)!}{s!(N-s+1-1)!}}\\
& = \sum_{s=0}^{N - d_{\max}}  \frac{(d_{\max} +s)!}{s!(d_{\max})!}      + \sum_{s = N -d_{\max}+1}^{N} \frac{N!}{s!(N-s)!}
\end{align*}

\end{proof}

\section{Number of Holding Classes and Contact Network Structure}\label{sec:numholdingexp}
 \begin{figure*}[ht]
\begin{subfigure}{.5\textwidth}
  \centering
  \includegraphics[width=0.7\linewidth]{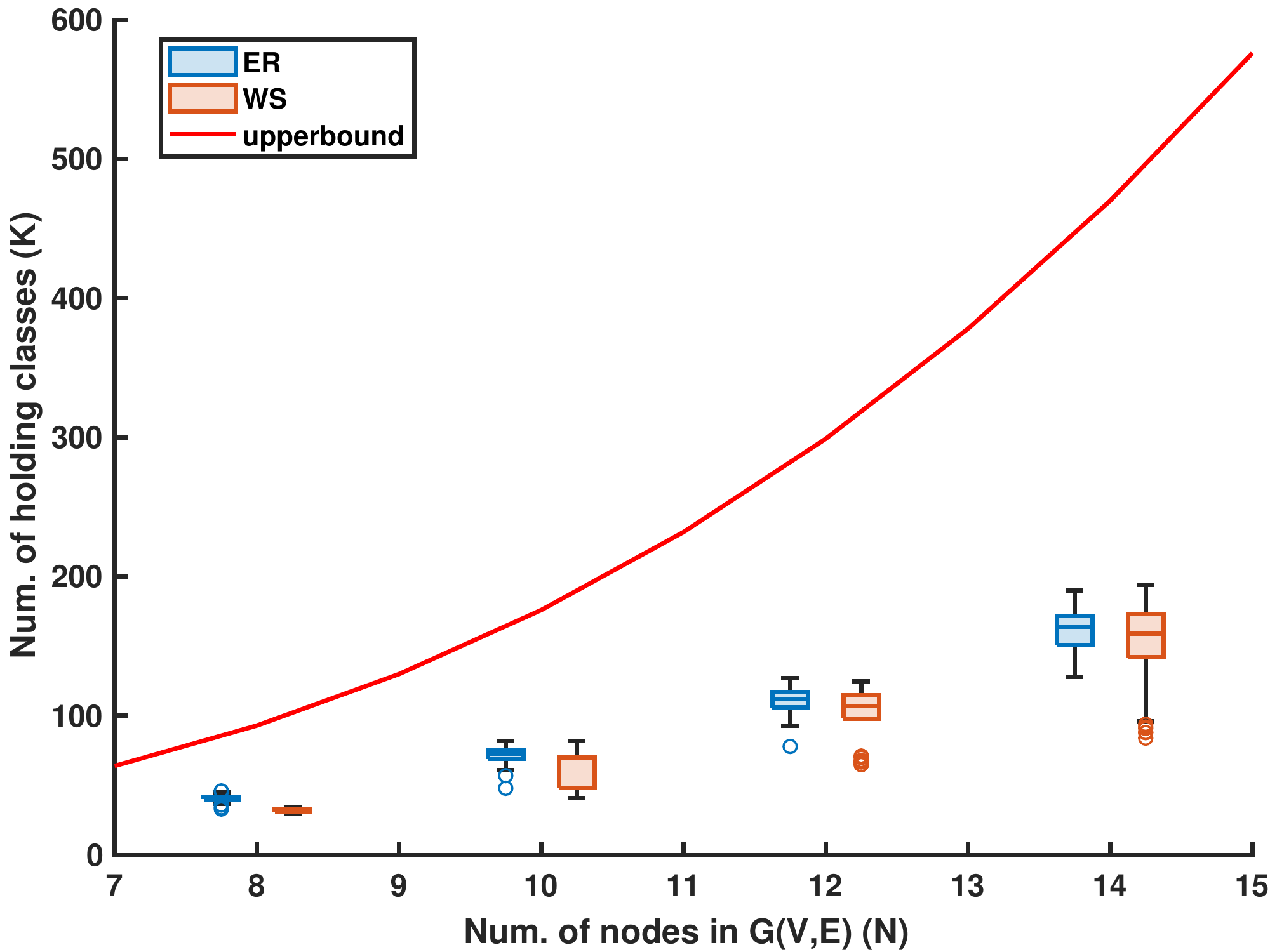}  
  \caption{Line indicates the theoretical upperbound~\eqref{eq:upperbound}}
  \label{fig:contactclass}
\end{subfigure}
\begin{subfigure}{.5\textwidth}
  \centering
  \includegraphics[width=0.7\linewidth]{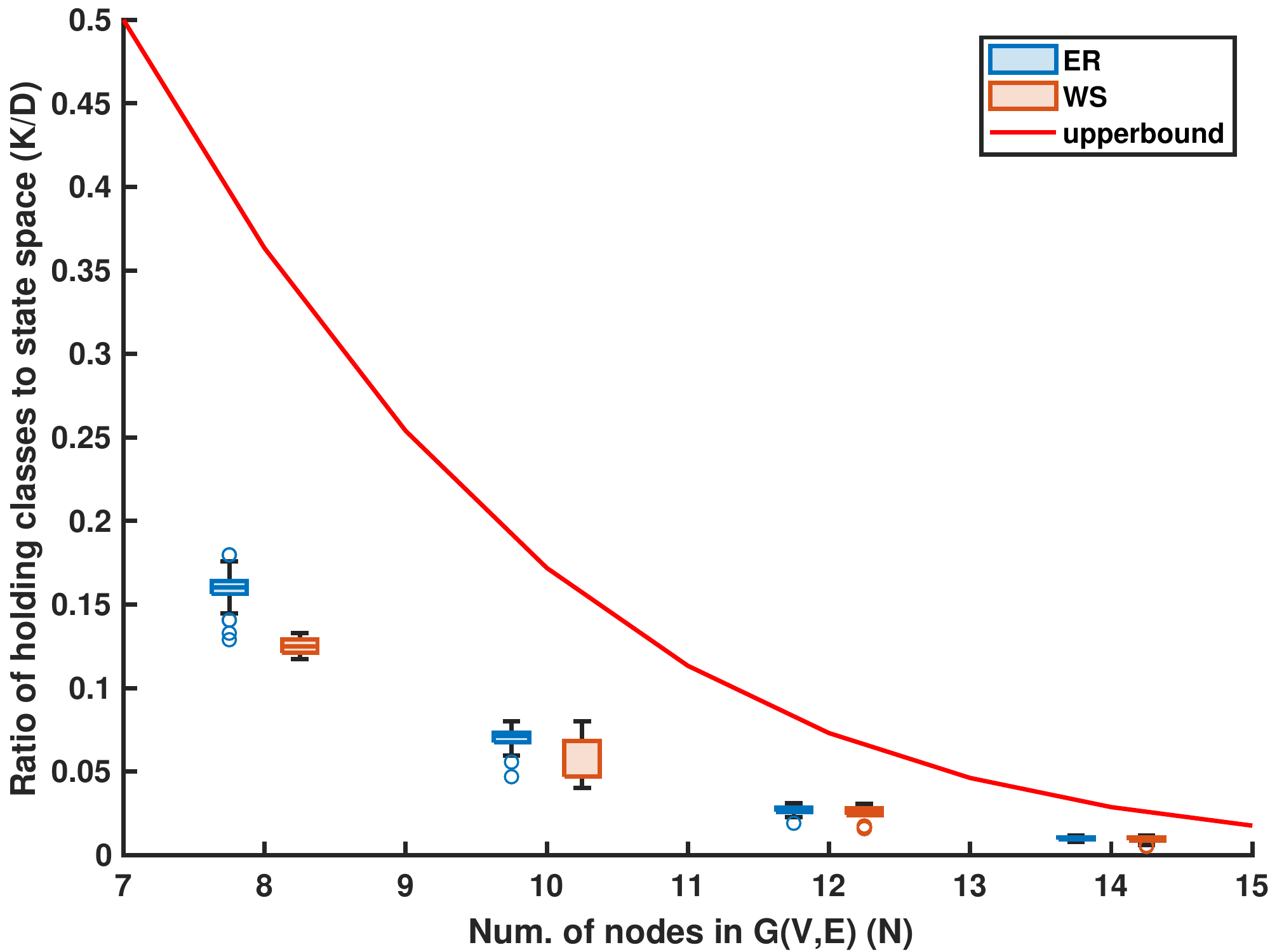}  
  \caption{Line indicates the theoretical upperbound~\eqref{eq:upperbound} divided by $D=2^N$}
  \label{fig:contactclassratio}
\end{subfigure}
\caption{Contact Process Num. of Holding Classes $(K)$ and Ratio to State Space Size $(K/D)$}
\end{figure*}

 \begin{figure*}[ht]
\begin{subfigure}{.5\textwidth}
  \centering
  \includegraphics[width=0.7\linewidth]{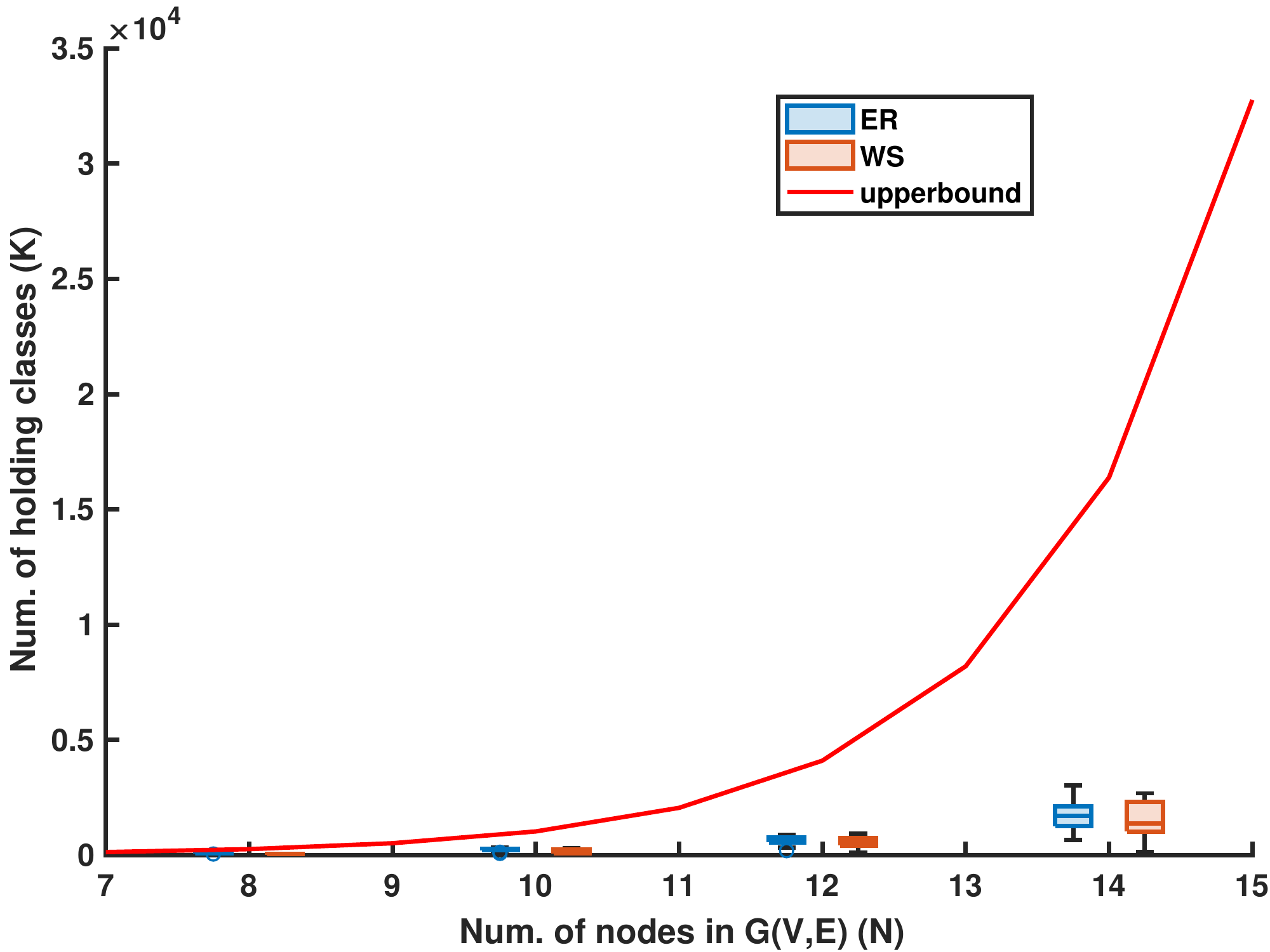}  
  \caption{Line indicates the theoretical upperbound~\eqref{eq:revbound}}
  \label{fig:revclass}
\end{subfigure}
\begin{subfigure}{.5\textwidth}
  \centering
  \includegraphics[width=0.7\linewidth]{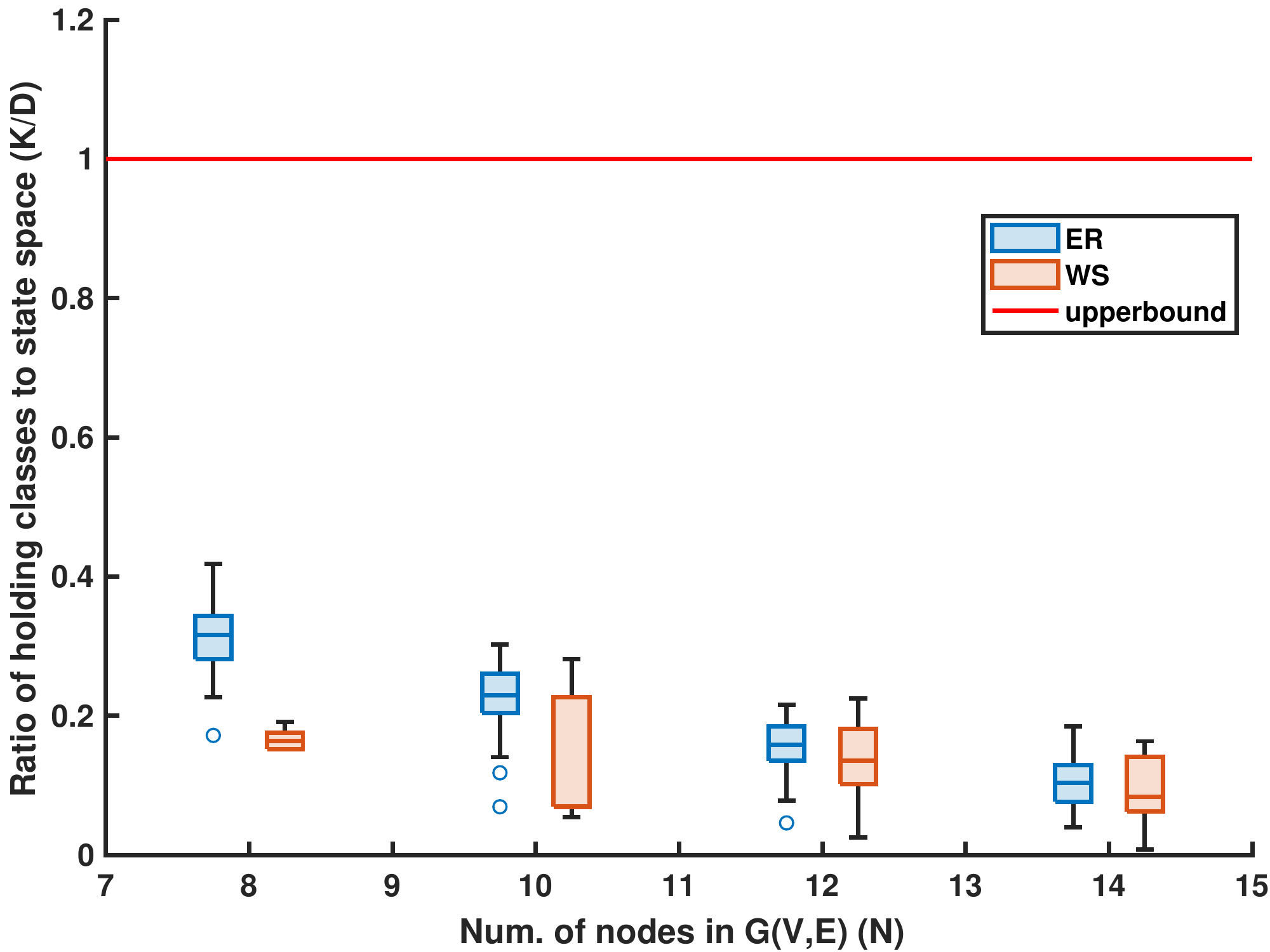}  
  \caption{Line indicates the theoretical upperbound~\eqref{eq:revbound} divided by $D= 2^N$}
  \label{fig:revclassratio}
\end{subfigure}
\caption{Reversible Contact Process Num. of Holding Classes $(K)$ and Ratio to State Space Size $(K/D)$}
\end{figure*}
 
The total number of holding classes, $K$, depends on the topology of the contact network, $G(V,E)$ and the formulation of the transition rates. For small sized networks, we can numerically compute the number of holding classes. First, we generated 50 different Erd\H{o}s-R\'{e}nyi (ER) graphs with the same number of nodes ($N$), but randomly chosen edge probability $p \sim U\left(\frac{\log N}{N}, 0.2\frac{\log N}{N}\right)$ assuming the contact process dynamics (\ref{eq:contactheal},~\ref{eq:contactinfect}) and the reversible contact process dynamics (\ref{eq:scaledheal},~\ref{eq:scaledinfect}). Then we generated 50 different Watts-Strogatz (WS) graphs of the same size $N$, but randomly chosen neighbor size, $nei \sim U\left(3, \frac{N}{2}\right)$ and rewiring probability $p \sim U(0.2, 0.8)$.

Figures~\ref{fig:contactclass} and~\ref{fig:revclass} show the total number of holding classes $K$ for different network size $N$ assuming the contact process and reversible contact process dynamics, respectively. The line shows the theoretical upperbound in~\eqref{eq:upperbound} and~\eqref{eq:revbound}. The actual number of holding classes is much smaller than the theoretical upperbound for both ER and WS graphs. Figures~\ref{fig:contactclassratio} and~\ref{fig:revclassratio} show the ratio of the number of holding class to the state space dimensionality. For the contact process, the ratio $\frac{K}{D}$ provably \emph{decreases} with increasing $N$ since the upperbound is polynomial instead of exponential in $N$. We see that this remains true for the reversible contact process experimentally.

\section{Parameter Estimation Experiments}\label{sec:exp}


We will study the performance of Algorithm~\ref{alg} for estimating the dynamics parameters for both the contact process and the reversible contact process using generated trajectories on synthetic and real-world graphs. We will look at two types of error: absolute error and relative error. The mean absolute error (MAE) is the absolute difference between the true healing rate, endeogenous infection rate, exogeneous infection rate ($\mu, \beta, \delta$) and the estimates ($\widehat{\mu}, \widehat{\beta}, \widehat{\delta}$): 
\[
\mu_\text{MAE} = \frac{1}{L} \sum_{i=1}^L |\widehat{\mu}_i- \mu_i |
\]

%

However, absolute error can be misleading when the true value is close to zero. Relative error compares the absolute error to the true and estimated values. We use symmetric mean absolute percentage error (SMAPE) \cite{flores1986pragmatic}:

\[
\mu_\text{SMAPE} = \frac{100\%}{L}\sum_{i=1}^{L} \frac{|\widehat{\mu}_i- \mu_i    |}{|\widehat{\mu}_i| + |\mu_i|},
\]
%
%
which is bounded between $0\%$ and $100\%$. A second reason that we used SMAPE is because an estimated value of $0$ when the true value is nonzero will have $100\%$ error. This is particularly important for learning parameters of graph dynamical systems because the parameters of interests are rates, which can be small but is generally nonzero.


\subsection{Parameter Estimation Performance for Contact Processes}

\subsubsection{\textbf{Synthetic Erd\H{o}s-R\'{e}nyi (ER) Random Graph}}

We generated fifty 100-node unweighted, undirected, connected Erd\H{o}s-R\'{e}nyi (ER) graphs. The state space of the contact process is $D = 2^{100}\approx 1.27e^{30}$. Per Theorem~\ref{thm:1}, we know that the maximum number of holding classes can be is 166,751. For each graph, a \emph{single} contact process trajectory of length $M$ was simulated, $\Sigma = \{\mathbf{x}(t_0), \ldots \mathbf{x}(t_{M-1})\}$. 


The initial configuration, $\mathbf{x}(t_0)$, is randomly generated; a probability $p_0$ is chosen from Uniform(0,1). The state of each node is independently assigned as a Bernoulli random variable with probability $p_0$. The infection and healing rates, $\mu, \beta, \delta$ are chosen independently from Uniform(0,3).

Given $\Sigma = \{\mathbf{x}(t_0), \ldots \mathbf{x}(t_{M-1})\}$, we used Algorithm~\ref{alg}. We used three different methods to find $\widehat{\theta}$: 1) weighted least squares (WLS), which solves~\eqref{eq:wls}; 2) non-negative (constrained) weighted least squares (NNLS), which solves~\eqref{eq:wls} but with the additional constraint that the estimated values need to be non-negative; 3) weighted least absolute deviation (LAD), which solves the weighed least absolute deviation problem~\eqref{eq:lad}.

\begin{figure*}[ht]
\begin{subfigure}{.5\textwidth}
  \centering
  \includegraphics[width=0.95\linewidth]{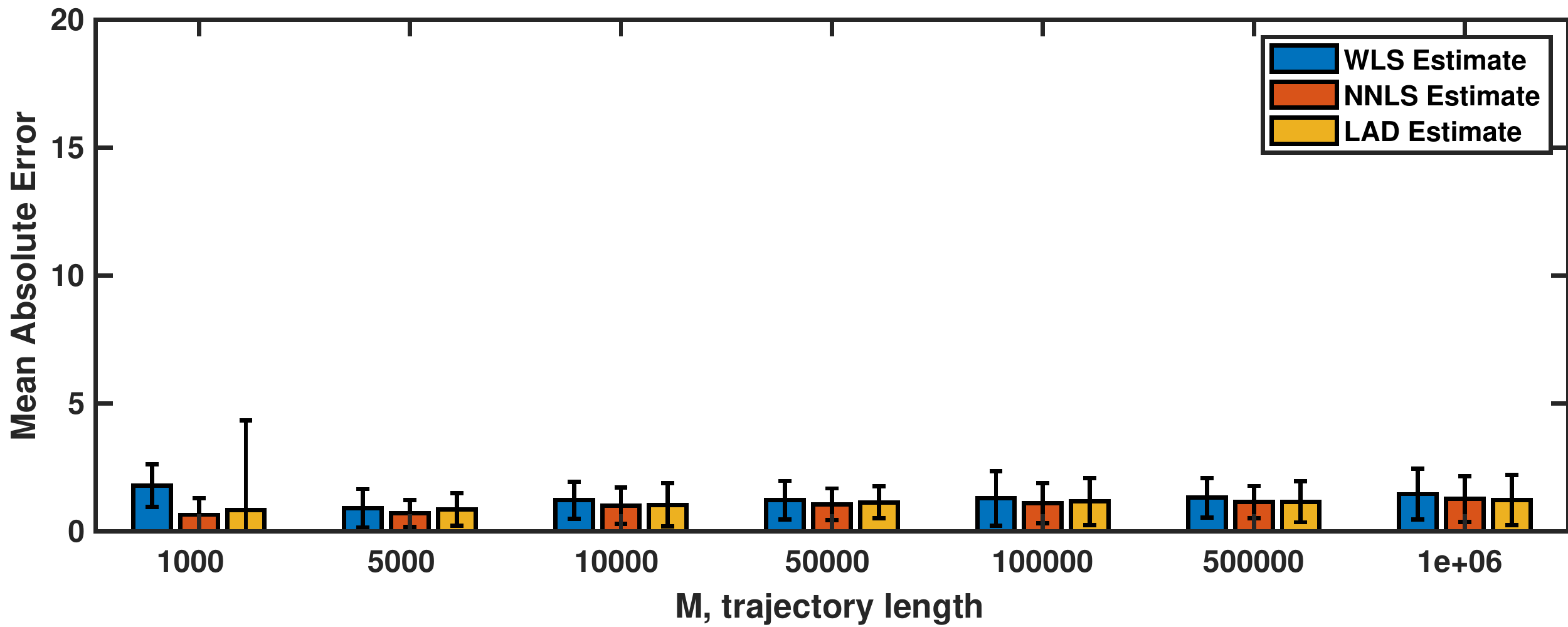}  
  \caption{MAE of Healing Rate $\mu$}
  \label{fig:contact_MAE_mu}
\end{subfigure}
\begin{subfigure}{.5\textwidth}
  \centering
  \includegraphics[width=0.95\linewidth]{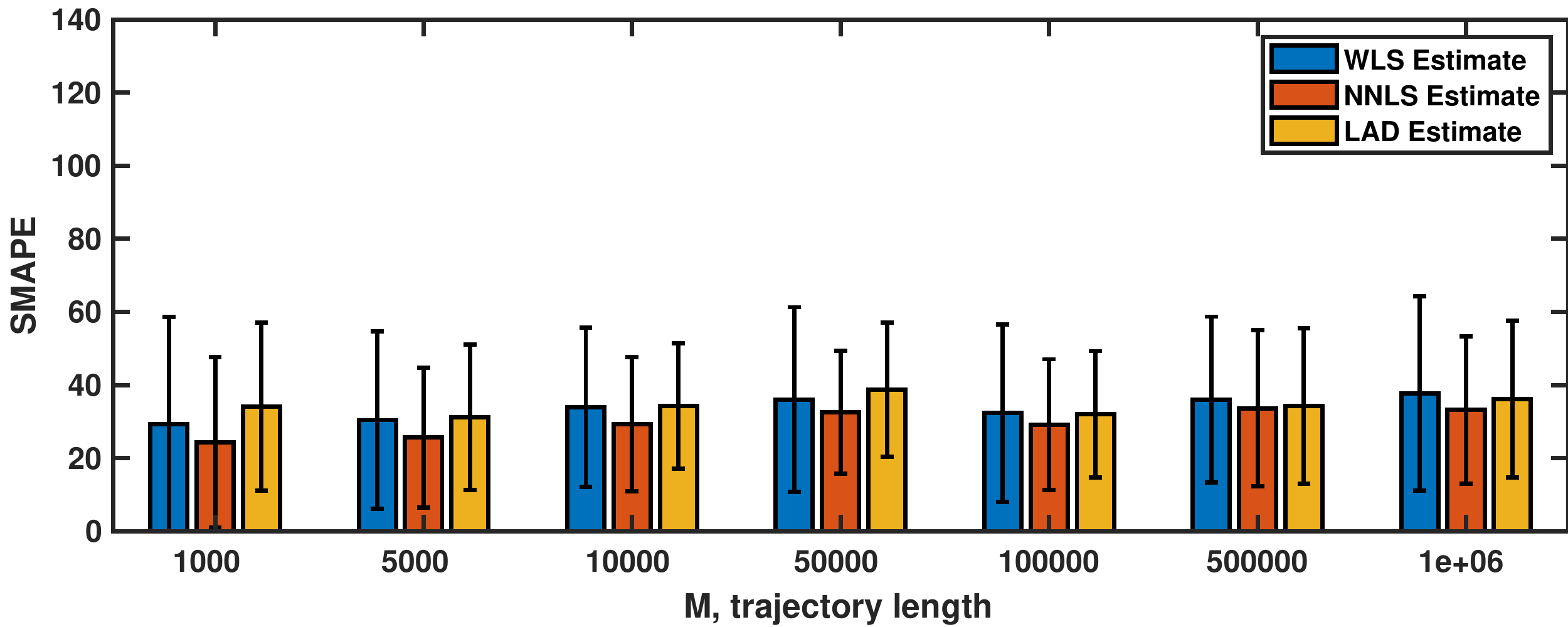}  
  \caption{SMAPE of Healing Rate $\mu$}
  \label{fig:contact_SMAPE_mu}
\end{subfigure}
\hfill

\begin{subfigure}{.5\textwidth}
  \centering
  \includegraphics[width=0.95\linewidth]{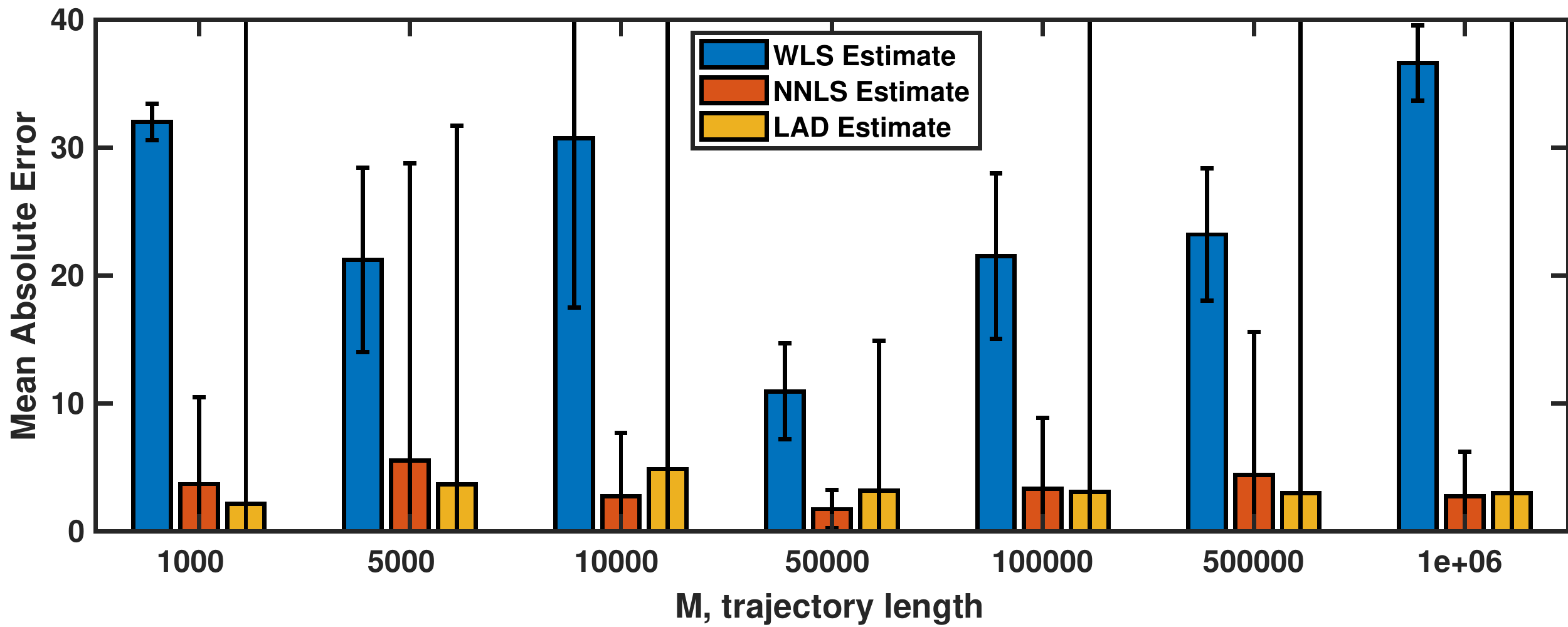}  
  \caption{MAE of Exogeneous Infection Rate $\beta$}
  \label{fig:contact_MAE_lambda}
\end{subfigure}
\begin{subfigure}{.5\textwidth}
  \centering
  \includegraphics[width=0.95\linewidth]{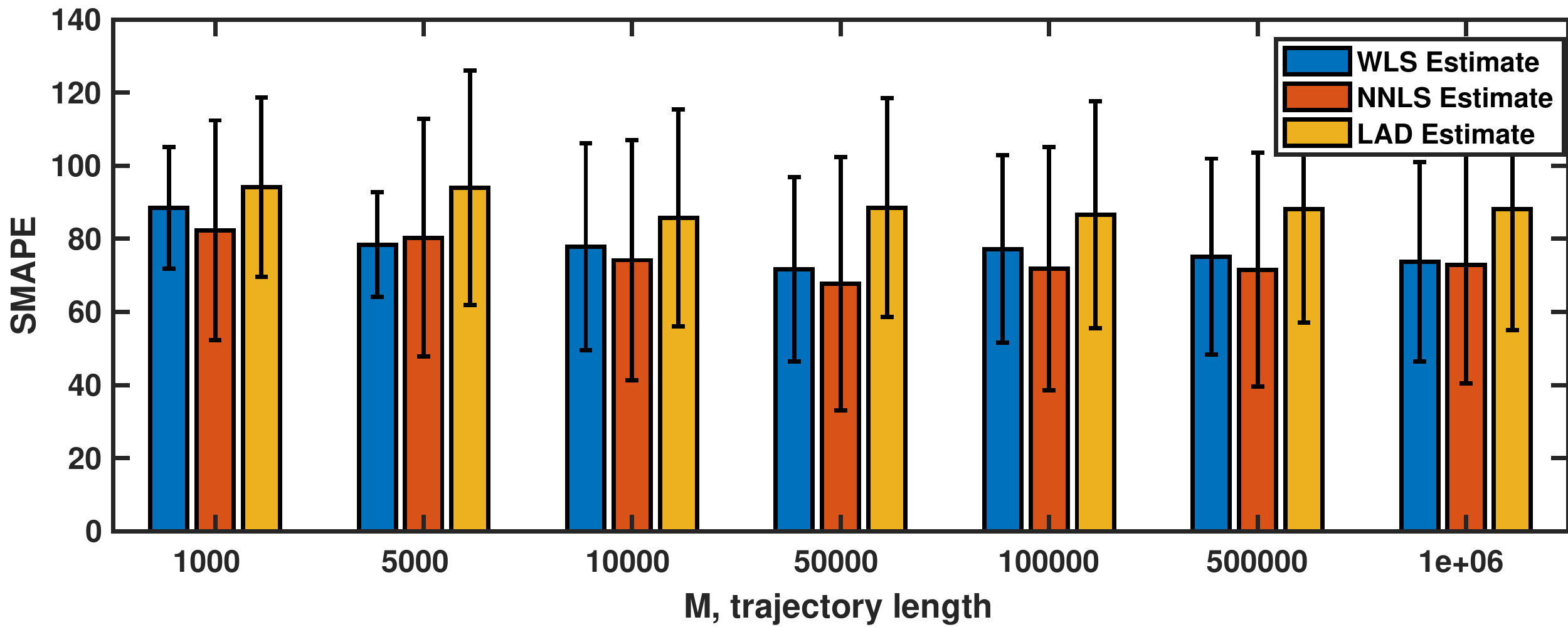}  
  \caption{SMAPE of Exogeneous Infection Rate $\beta$}
  \label{fig:contact_SMAPE_lambda}
\end{subfigure}
\hfill

\begin{subfigure}{.5\textwidth}
  \centering
  \includegraphics[width=0.95\linewidth]{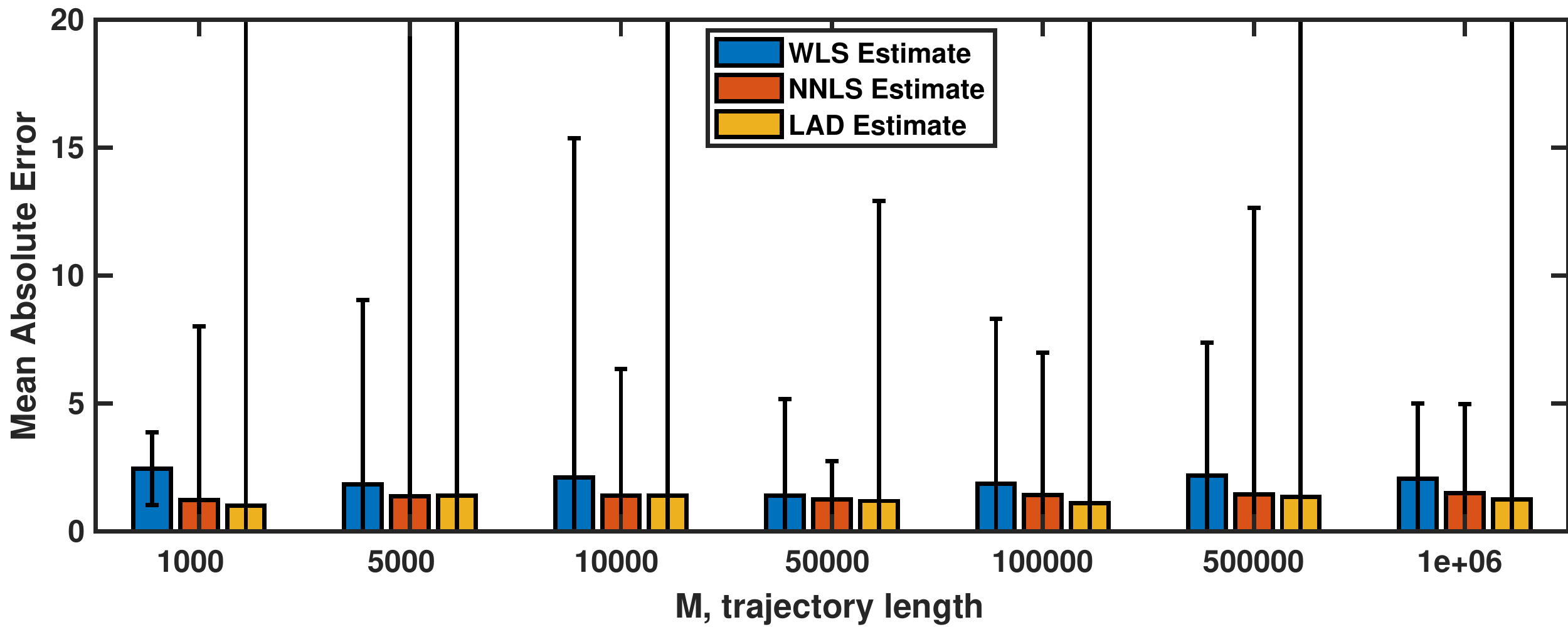}  
  \caption{MAE of Endogeneous Infection Rate $\delta$}
  \label{fig:contact_MAE_gamma}
\end{subfigure}
\begin{subfigure}{.5\textwidth}
  \centering
  \includegraphics[width=0.95\linewidth]{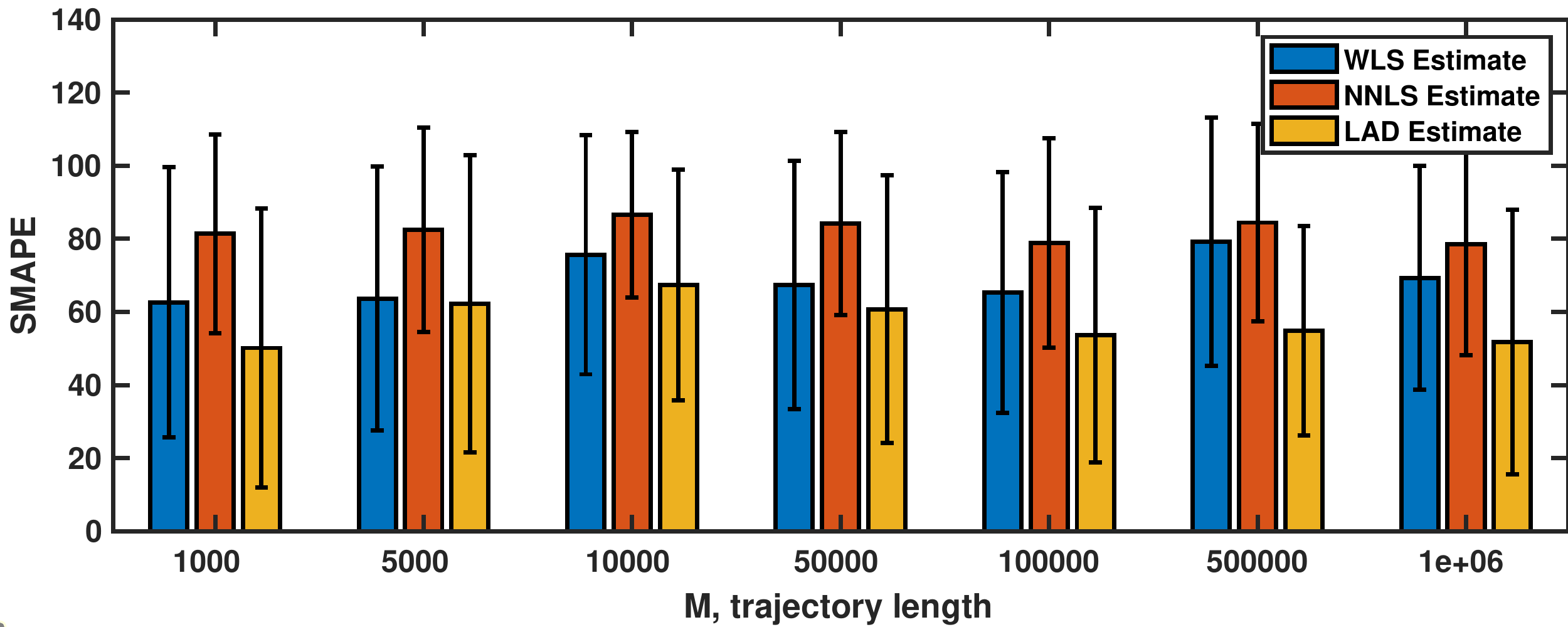}  
  \caption{SMAPE of Endogeneous Infection Rate $\delta$}
  \label{fig:contact_SMAPE_gamma}
\end{subfigure}
\caption{Contact Process: Error vs. Trajectory Length ($M$) for Erd\H{o}s-R\'{e}nyi Graphs}
\label{fig:contactER100result}
\end{figure*}

Figure~\ref{fig:contactER100result} shows the MAE and SMAPE error for the healing rate, endogeneous infection rate, and exogeneous infection rate. The black line indicates the standard deviation. The error does not decrease with increasing number of observations. Partly, this is because successive observations in the trajectory are \emph{not independent}. As a result the error is more influenced by the choice of the initial state and the topology of the $G(V,E)$. 

Looking at the absolute error, we can see that the standard deviation can sometimes be large be larger than the mean value. This means that the estimate is overdispersed. The weighted least squares (WLS) estimator has the largest error, but generally has the smallest standard deviation. If we consider relative error, all three estimator perform similarly. The reason that SMAPE of non-negative least squares (NNLS) and least absolute deviation (LAD) estimators increased compare to WLS is because both estimators may give results that are exactly zero. Therefore, they are penalized by SMAPE, which assigns a 100\% error to an estimate that is exactly zero. From this perspective, the performance of the WLS estimator is comparable to both NNLS and LAD.

The estimate of the healing rate always give the lowest error compared to estimates of infection rates. This is because there is only one type of healing events so there are comparatively more observations for estimating the healing rate. For the contact process, the error for the endogeneous infection rate, $\delta$ is smaller than the error for the exogeneous infection rate, $\beta$. One reason may be that direct observation of a node becoming infected via exogeneous infection (i.e., infection from outside the network) is rarely observed. Therefore, it can be difficult to disentangle $\beta$ from $\delta$. If we consider SMAPE, then NNLS estimator gives the lowest error for the endogeneous infection rate, while the LAD estimator gives the lowest error for the exogeneous infection rate.

\subsubsection{\textbf{118-Bus Power Grid Graph}}

\begin{figure}[h] 
\includegraphics[width=0.3\textwidth]{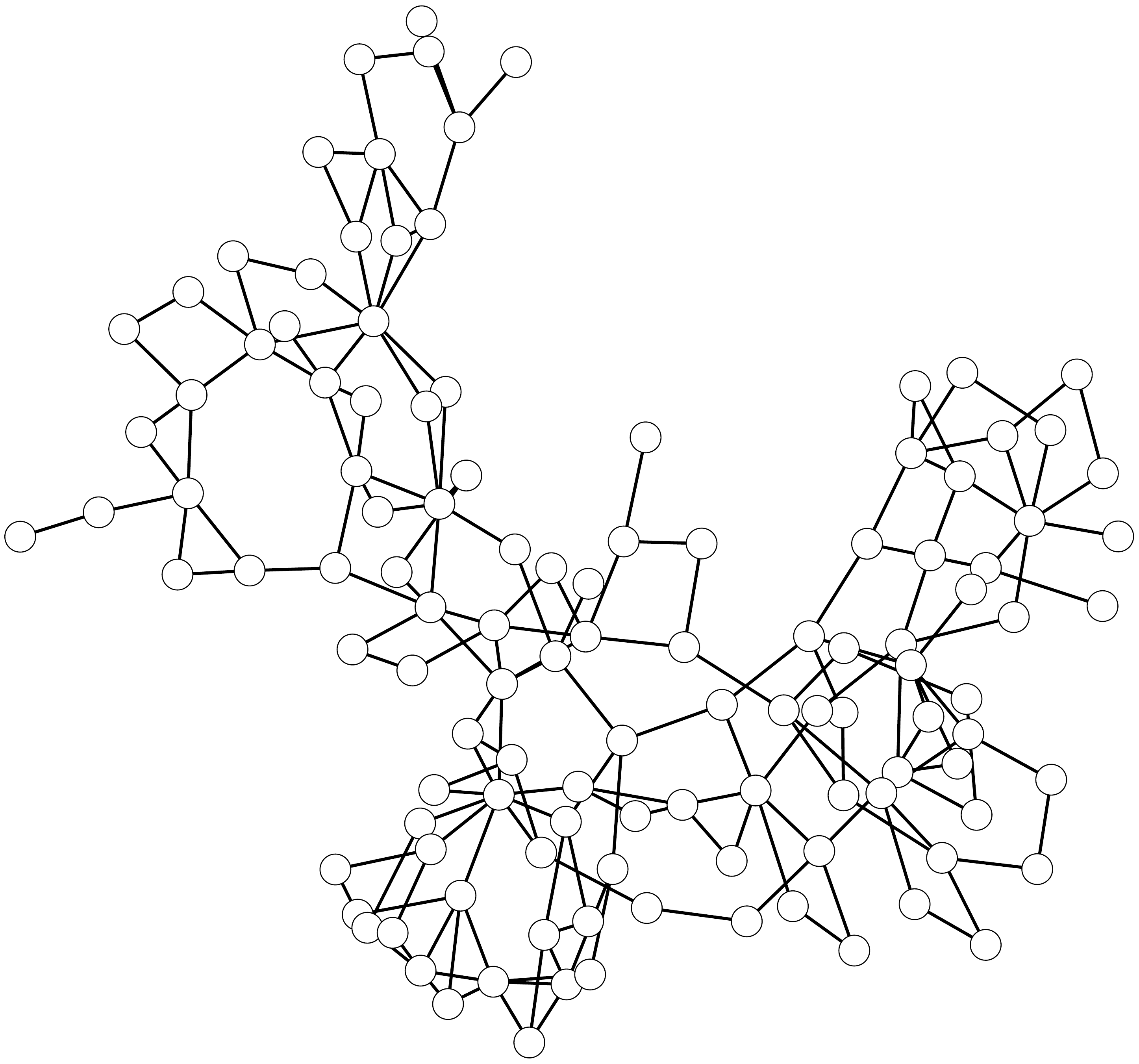}
\centering
 \caption{Network Visualization of 118-bus Power Flow Test System}
 \label{fig:118bus}
\end{figure}

In this section, we consider a real-world network, the IEEE 118-bus Power Flow Test System from \cite{christie1993power}. The nodes (N = 118) of the network correspond to the bus in the power grid, which is the location where a line or several lines connects at. The number of edges is 179. The maximum degree of the network is 9.  A node can be in the failed ($x_i =1$) or working ($x_i = 0)$ state. The graph representation of the system is shown in Figure~\ref{fig:118bus}.

For this graph, \emph{fifty} independent trajectories of length $M$, $\Sigma = \{\mathbf{x}(t_0), \ldots \mathbf{x}(t_{M-1})\}$ from the contact process was simulated. The initial configuration $\mathbf{x}(t_0)$ and dynamics parameters, $\mu, \beta, \delta$, are chosen randomly using the same procedure described previously. Figure~\ref{fig:contactpowerresult} shows the MAE and SMAPE error for $\mu, \beta, \delta$. Again we see that the length of the observed trajectory do not have much effect on the error. The error is smaller compared to experiments involving random graphs. This suggest that the topology of the network has a large impact on how well parameters can be learned from data.


\begin{figure*}[ht]
\begin{subfigure}{.5\textwidth}
  \centering
  \includegraphics[width=0.95\linewidth]{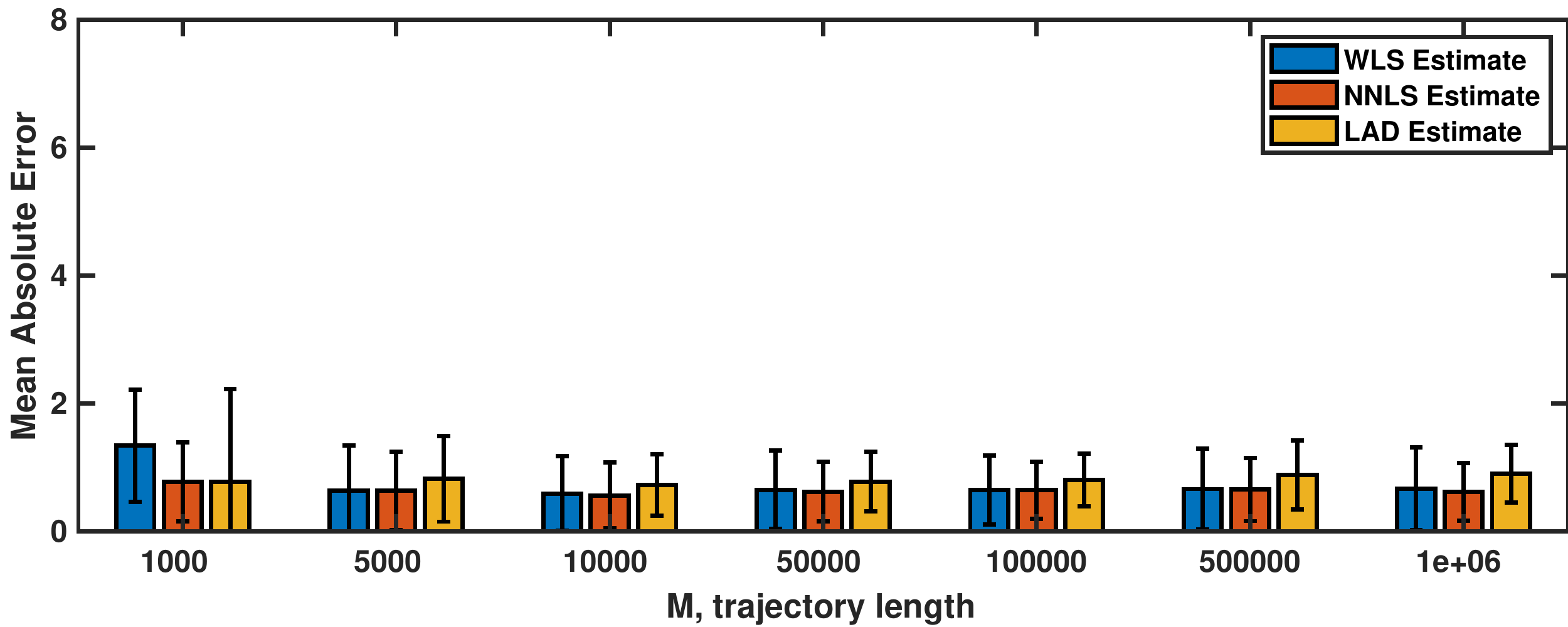}  
  \caption{MAE of Healing Rate $\mu$}
  \label{fig:118buscontact_MAE_mu}
\end{subfigure}
\begin{subfigure}{.5\textwidth}
  \centering
  \includegraphics[width=0.95\linewidth]{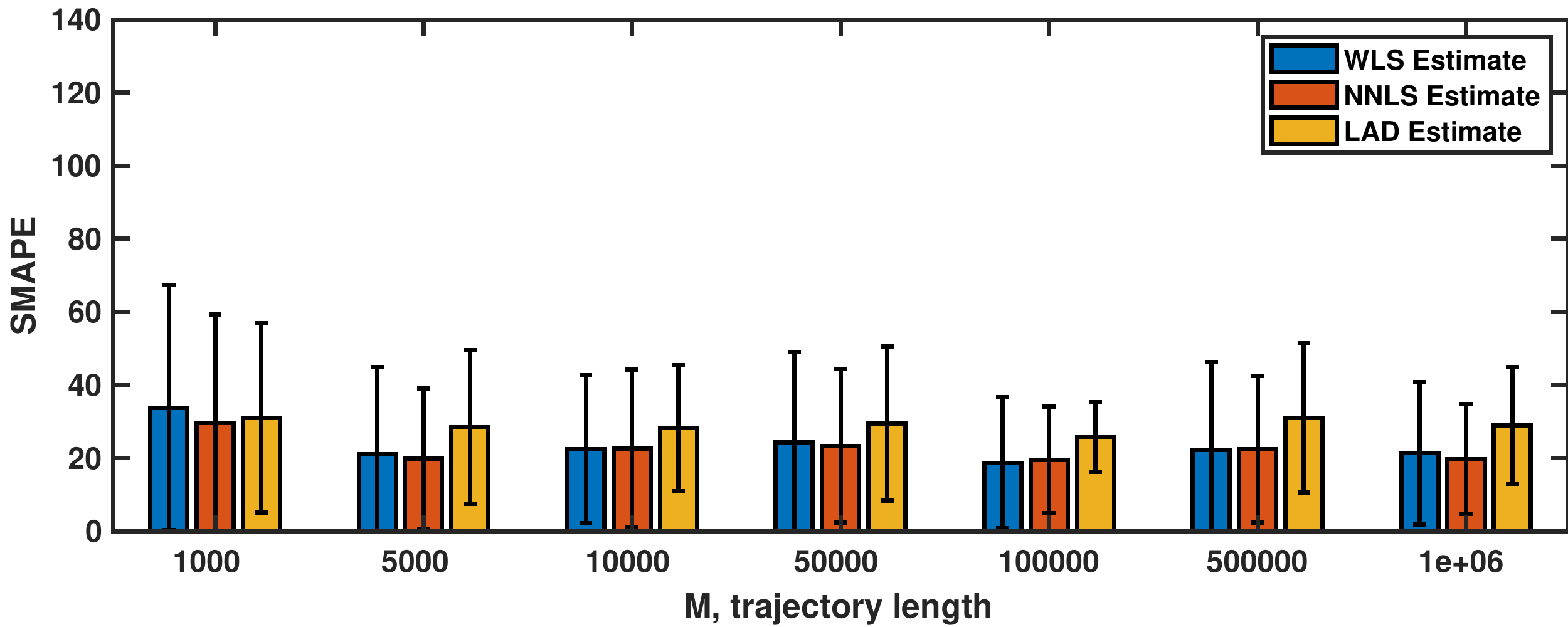}  
  \caption{SMAPE of Healing Rate $\mu$}
  \label{fig:118buscontact_SMAPE_mu}
\end{subfigure}
\hfill

\begin{subfigure}{.5\textwidth}
  \centering
  \includegraphics[width=0.95\linewidth]{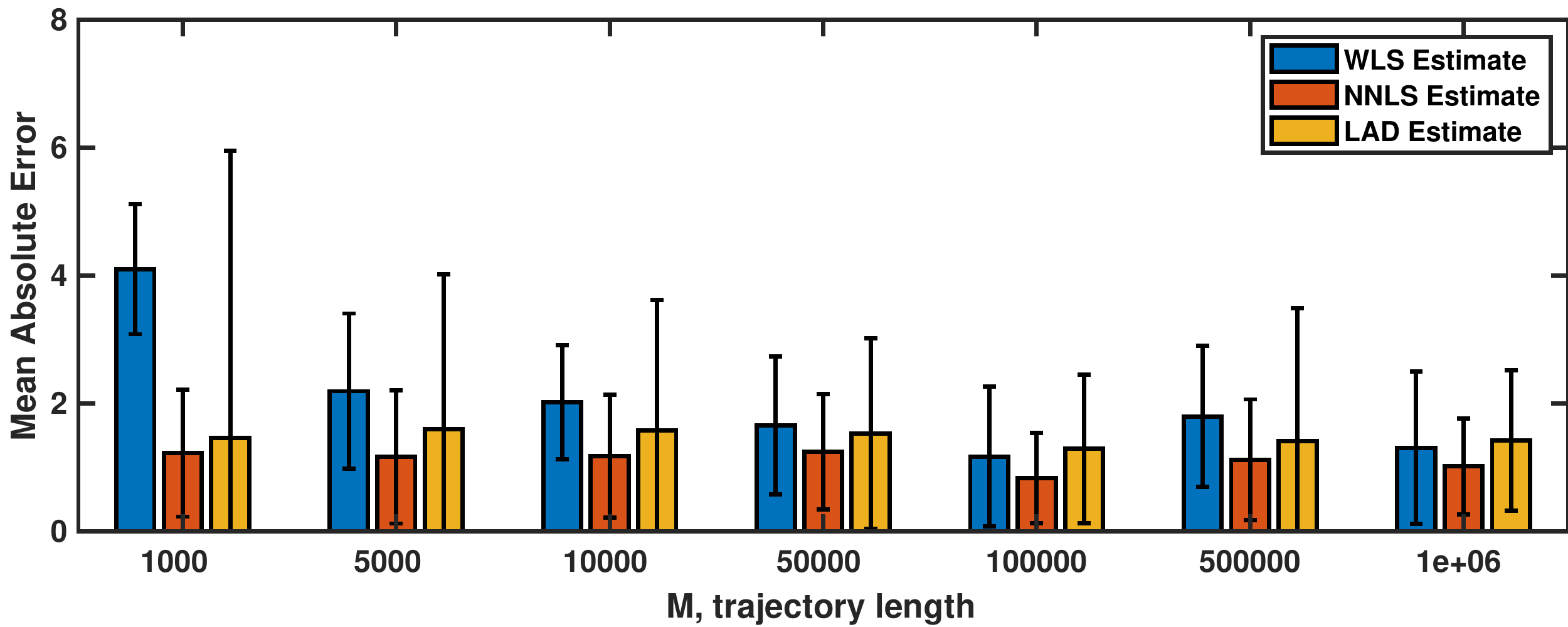}  
  \caption{MAE of Exogeneous Infection Rate $\beta$}
  \label{fig:ER100SMAPE}
\end{subfigure}
\begin{subfigure}{.5\textwidth}
  \centering
  \includegraphics[width=0.95\linewidth]{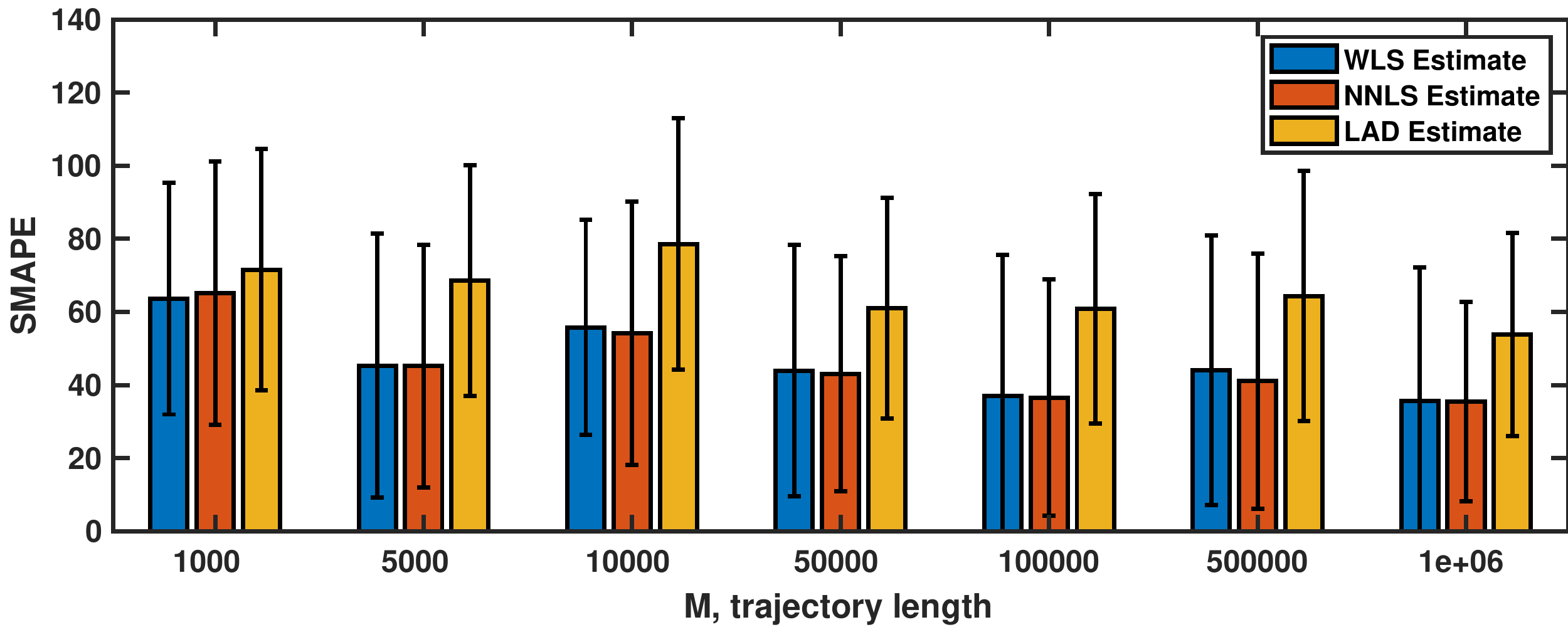}  
  \caption{SMAPE of Exogeneous Infection Rate $\beta$}
  \label{fig:ER100SMAPE}
\end{subfigure}
\hfill

\begin{subfigure}{.5\textwidth}
  \centering
  \includegraphics[width=0.95\linewidth]{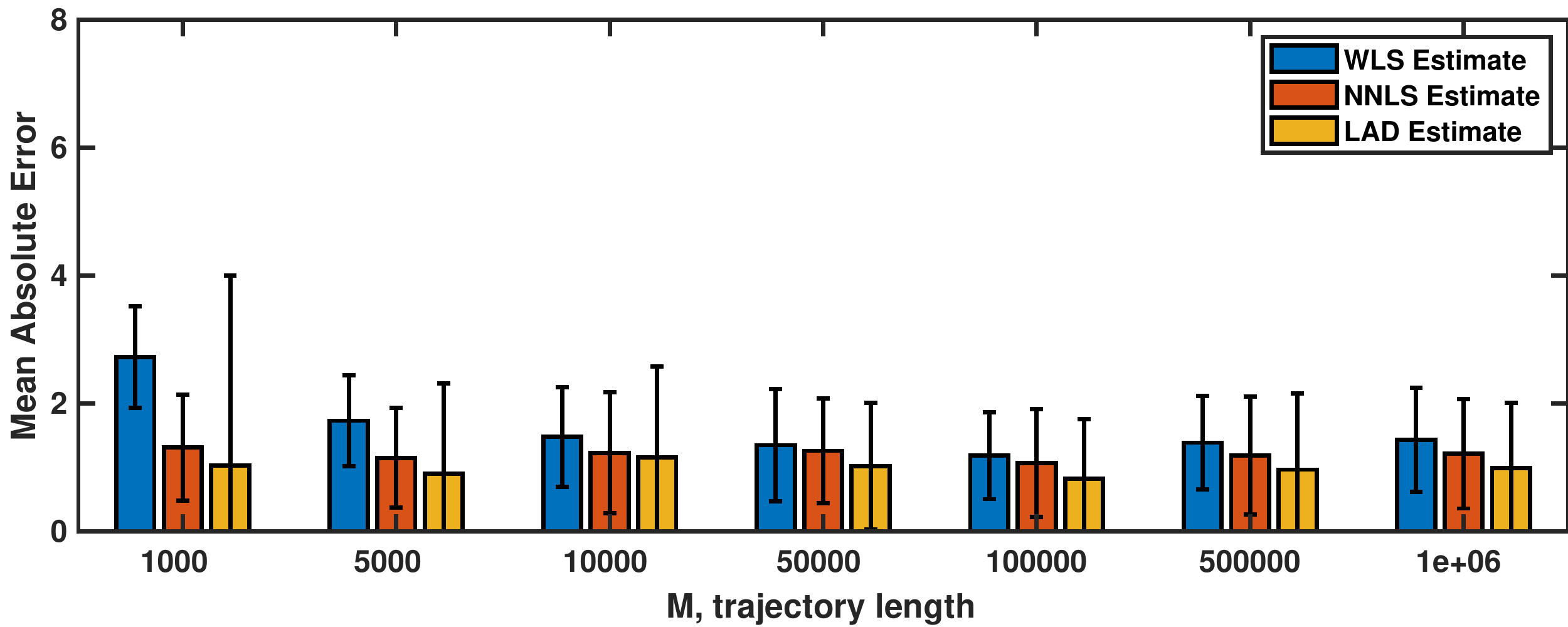}  
  \caption{MAE of Endogeneous Infection Rate $\delta$}
  \label{fig:ER100SMAPE}
\end{subfigure}
\begin{subfigure}{.5\textwidth}
  \centering
  \includegraphics[width=0.95\linewidth]{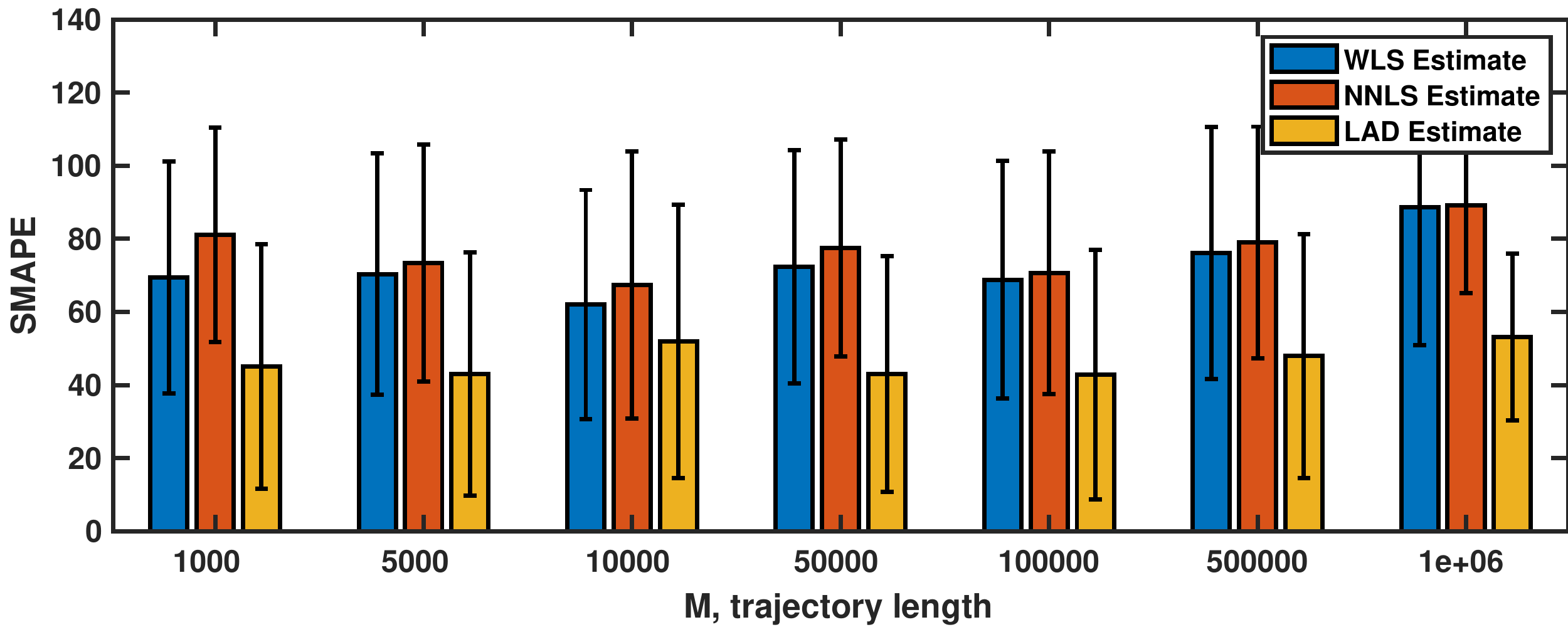}  
  \caption{SMAPE of Endogeneous Infection Rate $\delta$}
  \label{fig:ER100SMAPE}
\end{subfigure}
\caption{Contact Process: Error vs. Trajectory Length ($M$) for 100-Bus Power Grid}
\label{fig:contactpowerresult}
\end{figure*}

%
%


\subsection{Parameter Estimation Performance for Reversible Contact Processes}

\subsubsection{Reversible Contact Processes: Synthetic Erd\H{o}s-R\'{e}nyi (ER) Graphs}
We used the same ER graphs and generation process as in the previous section. The contact process dynamic is replaced by the reversible contact process. Note that from Section~\ref{sec:rev}, the transition rates of the reversible contact process are not directly a linear function of the healing and infection rates. Instead,  it is linear in higher dimensions. As a results, there are more parameters, $\theta = [\mu, \beta, \beta\delta, \beta\delta^2, \ldots \beta\delta^{d_{\max}}]^T$ need to be estimated via Algorithm~\ref{alg} than compared to the contact process. While we can directly obtain estimates of $\mu$ and $\beta$, an additional step is need to find $\delta$ from $ \beta\delta, \beta\delta^2, \ldots \beta\delta^{d_{\max}}$. In contrast to the contact process, the estimate of the endogenous infection rate $\delta$ is poorer than the exogeneous infection rate $\beta$.

Figure~\ref{fig:revER100result} shows the the MAE and SMAPE error for the healing rate, endogeneous infection rate, and exogeneous infection rate. The estimates for the reversible contact process have larger error compared to the contact process. Note that the absolute error is shown in \emph{log} scale. The standard deviation of the estimates can also be very large.

\begin{figure*}[ht]
\begin{subfigure}{0.5\textwidth}
  \centering
  \includegraphics[width=0.95\linewidth]{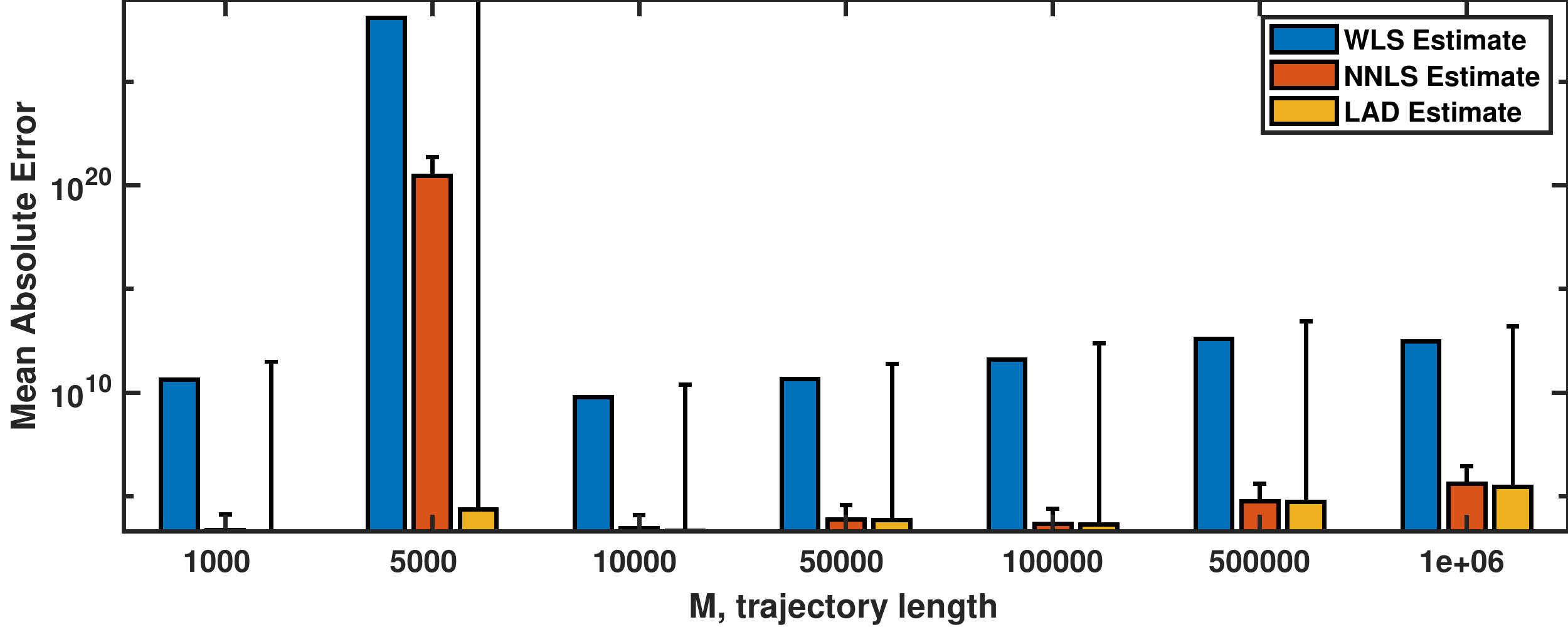}  
  \caption{Mean Absolute Error of Healing Rate $\mu$}
  \label{fig:ER100MAE}
\end{subfigure}
\begin{subfigure}{0.5\textwidth}
  \centering
  \includegraphics[width=0.95\linewidth]{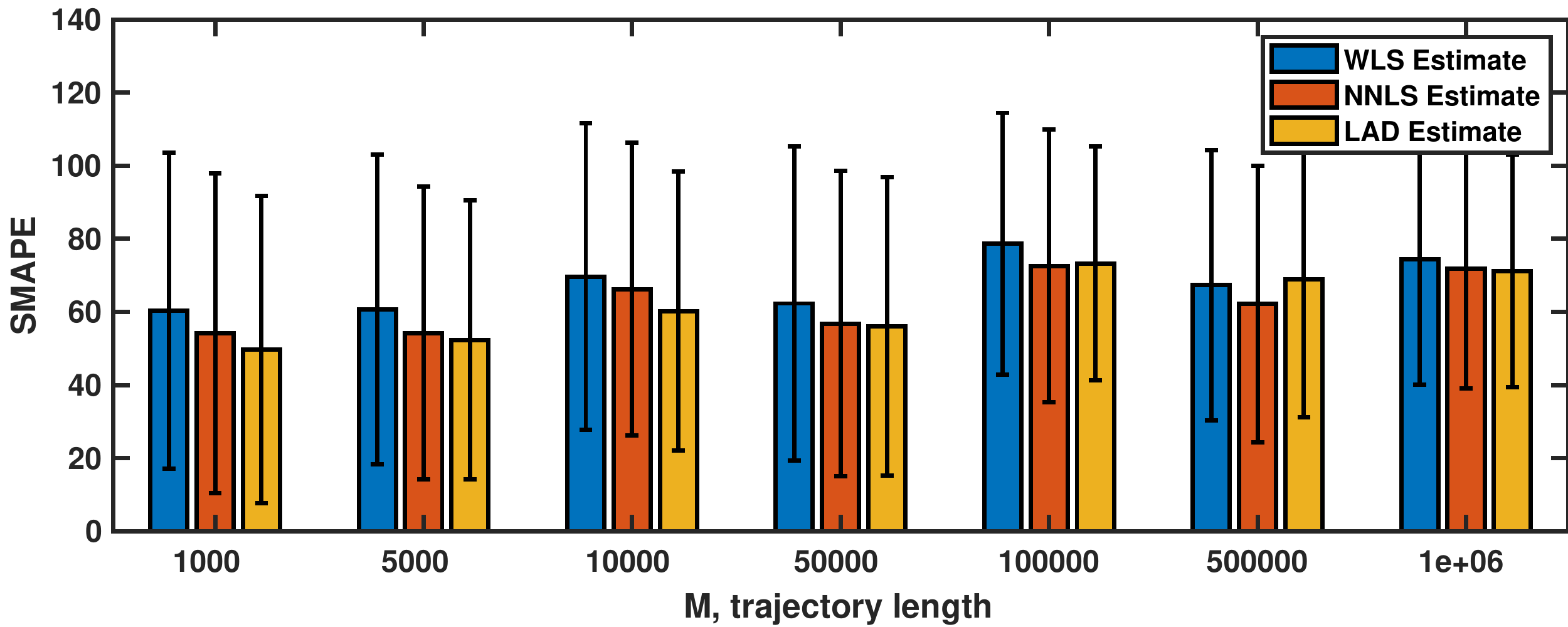}  
  \caption{SMAPE of Healing Rate $\mu$}
  \label{fig:ER100MAE}
\end{subfigure}
\hfill

\begin{subfigure}{0.5\textwidth}
  \centering
  \includegraphics[width=0.95\linewidth]{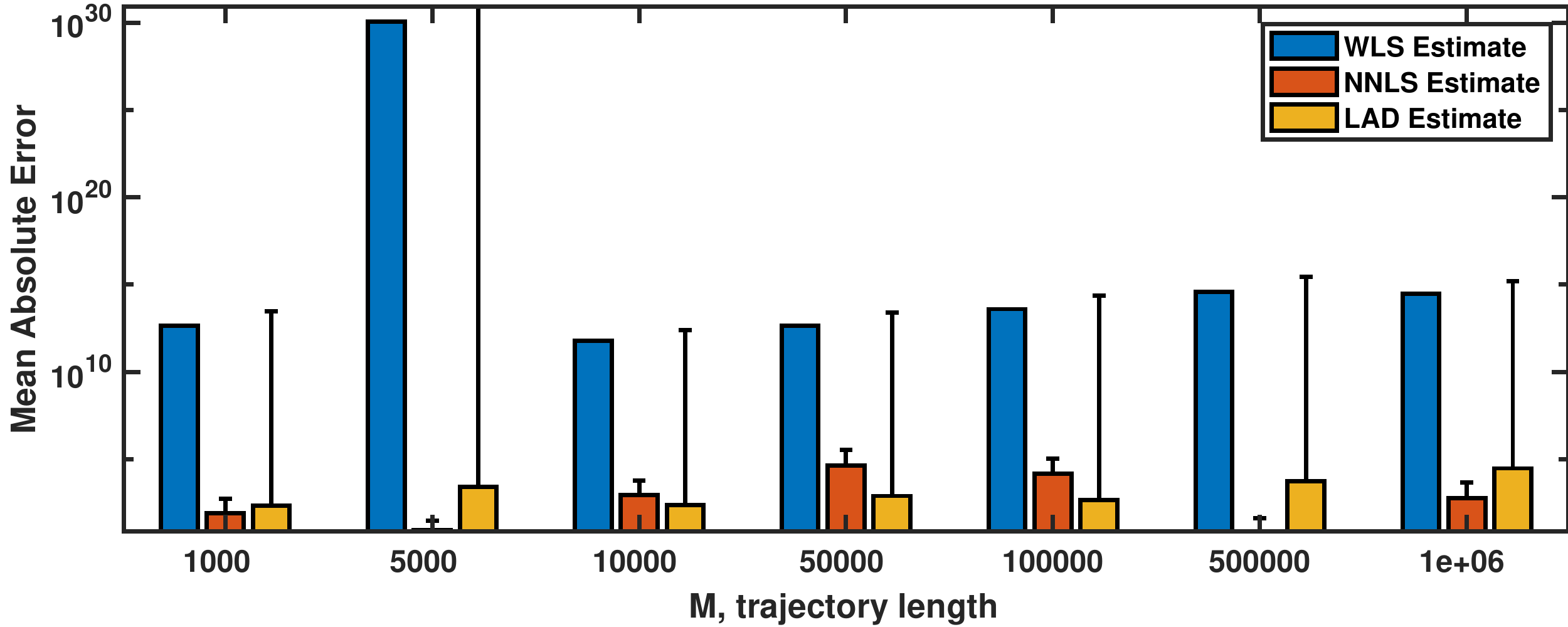}  
  \caption{Mean Absolute Error of Exogeneous Infection Rate $\beta$}
  \label{fig:ER100SMAPE}
\end{subfigure}
\begin{subfigure}{.5\textwidth}
  \centering
  \includegraphics[width=0.95\linewidth]{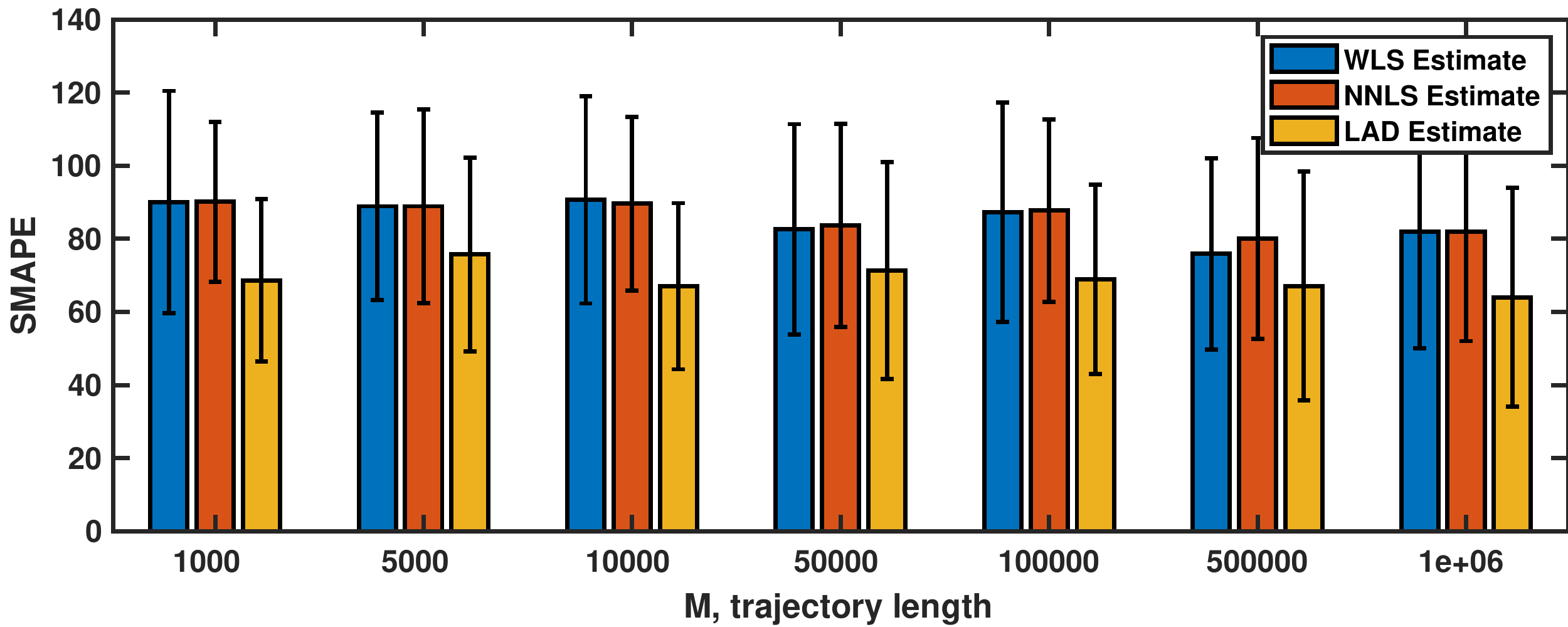}  
  \caption{SMAPE of Exogeneous Infection Rate $\beta$}
  \label{fig:ER100SMAPE}
\end{subfigure}
\hfill

\begin{subfigure}{.5\textwidth}
  \centering
  \includegraphics[width=0.95\linewidth]{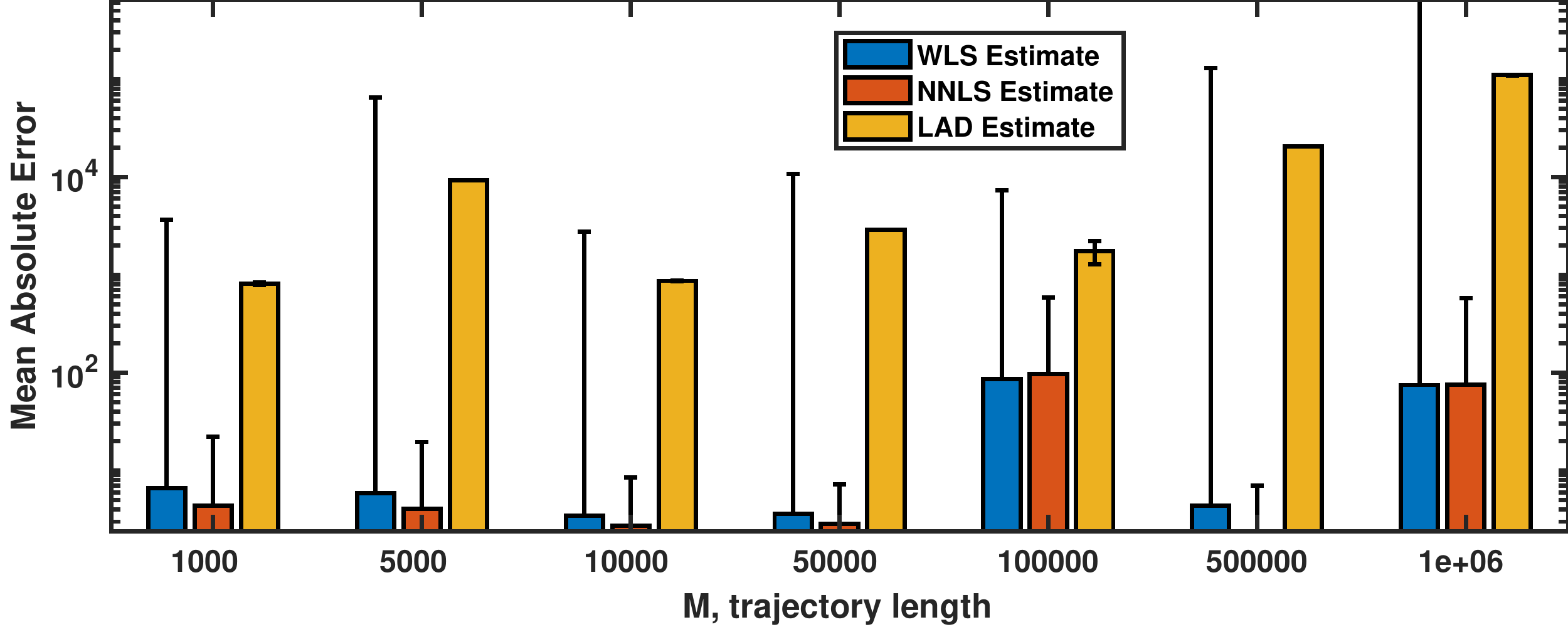}  
  \caption{Mean Absolute Error of Endogeneous Infection Rate $\delta$}
  \label{fig:ER100SMAPE}
\end{subfigure}
\begin{subfigure}{.5\textwidth}
  \centering
  \includegraphics[width=0.95\linewidth]{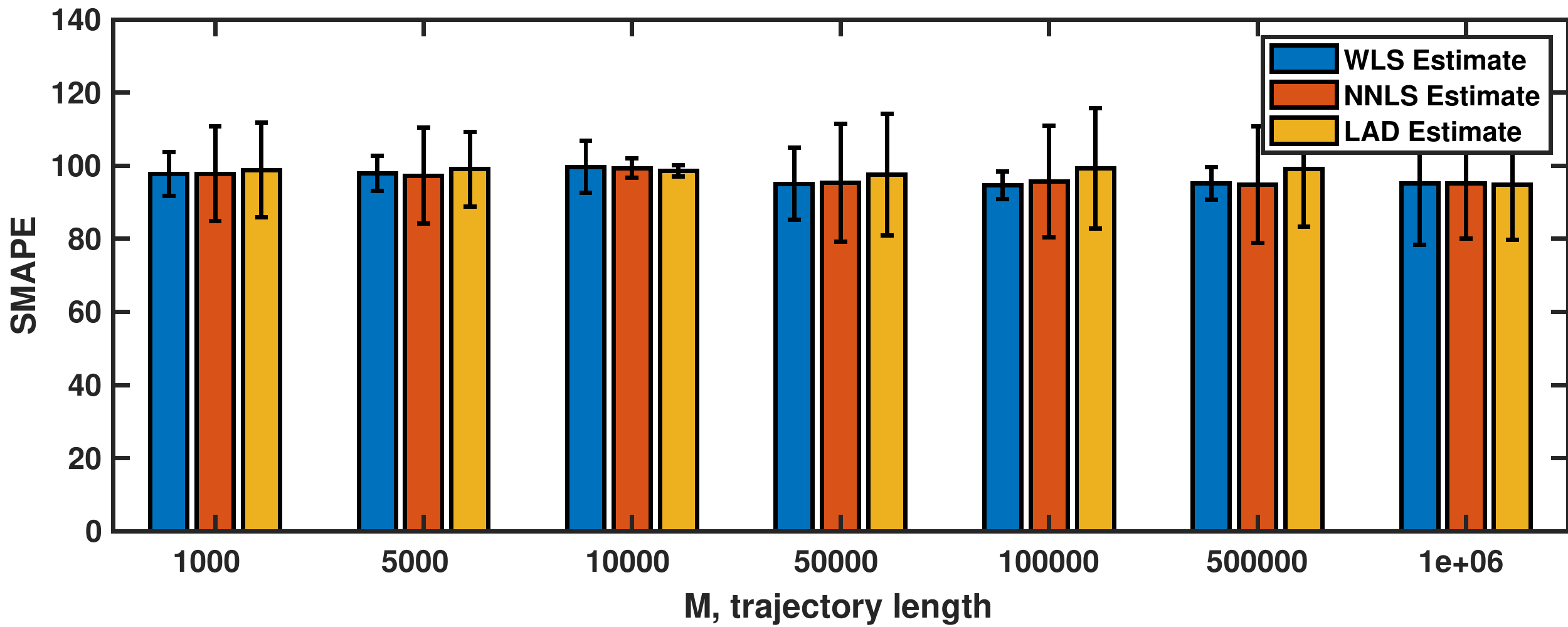}  
  \caption{SMAPE of Endogeneous Infection Rate $\delta$}
  \label{fig:ER100SMAPE}
\end{subfigure}
\caption{Reversible Contact Process: Error vs. Trajectory Length ($M$) for Erd\H{o}s-R\'{e}nyi Graphs}
\label{fig:revER100result}
\end{figure*}

\subsubsection{\textbf{Reversible Contact Processes: 118-bus Power Grid Graph}}

Figure~\ref{fig:revbus118result} shows the mean absolute error and SMAPE from 50 independent trajectories $\Sigma$ from the reversible contact process with random initialization and dynamics parameters. The errors are much lower for $\mu, \beta, \delta$ compared to experiments using randomly generated ER graphs. This may be because the 110-bus network has very low average degree, and the maximum degree is limited to $9$. The NNLS estimator gives slightly lower error than the other estimators. We can also see that the error of the reversible contact process is in general larger than the contact process.

\begin{figure*}[ht]

\begin{subfigure}{.5\textwidth}
  \centering
  \includegraphics[width=0.95\linewidth]{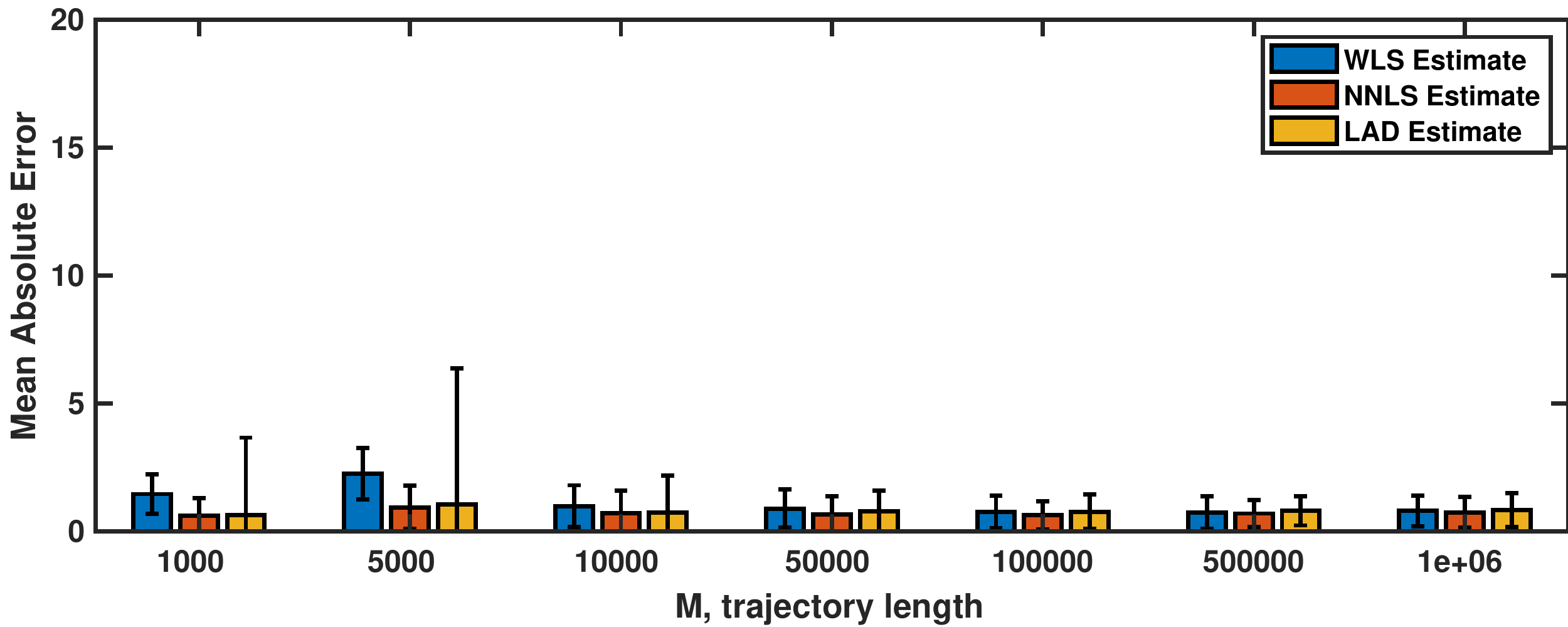}  
  \caption{Mean Absolute Error of Healing Rate $\mu$}
  \label{fig:118BusMAE}
\end{subfigure}
\begin{subfigure}{.5\textwidth}
  \centering
  \includegraphics[width=0.95\linewidth]{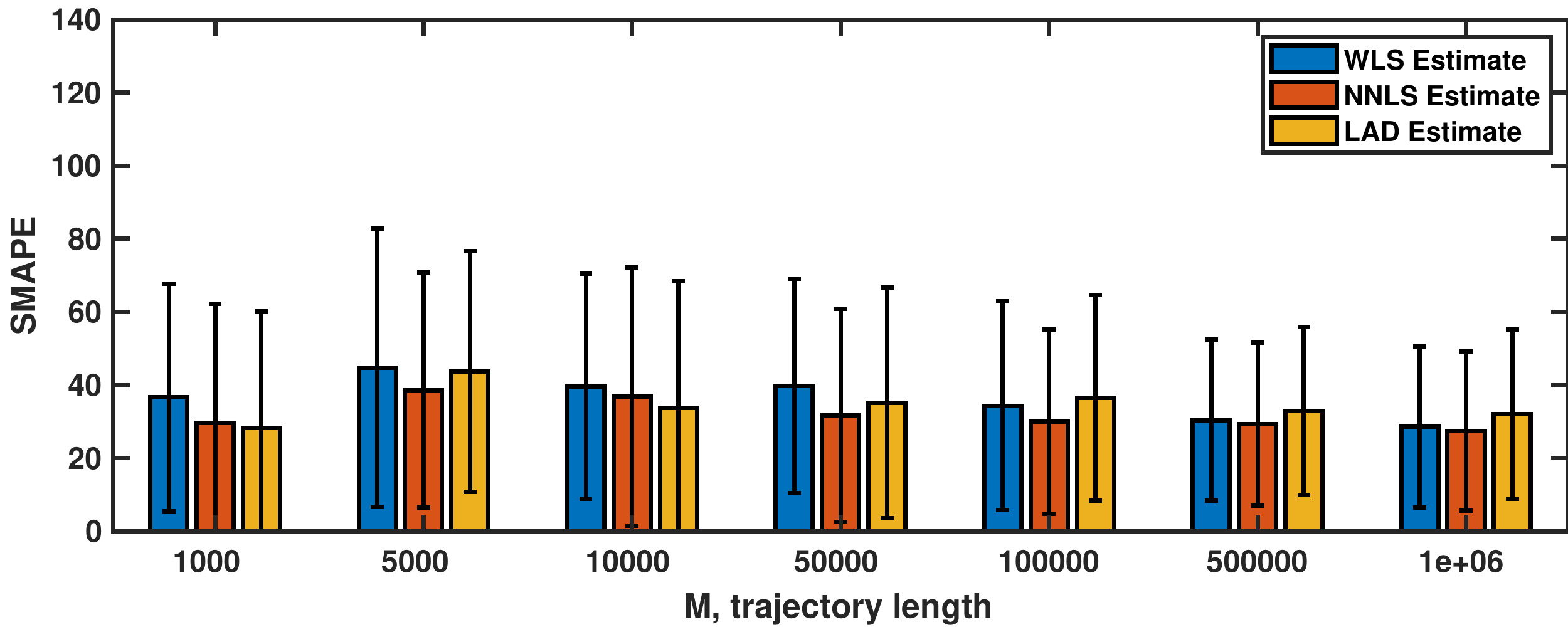}  
  \caption{SMAPE of Healing Rate $\mu$}
  \label{fig:ER100MAE}
\end{subfigure}
\hfill

\begin{subfigure}{.5\textwidth}
  \centering
  \includegraphics[width=0.95\linewidth]{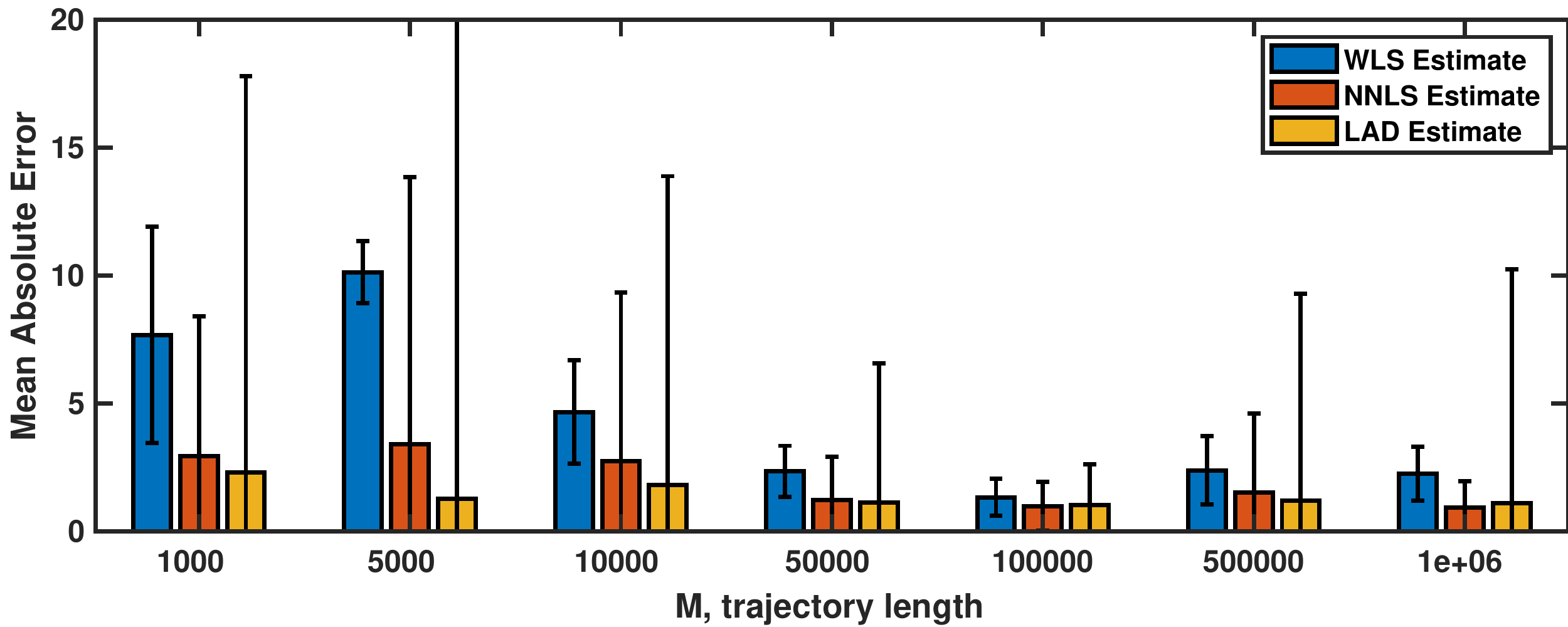}  
  \caption{Mean Absolute Error of Exogeneous Infection Rate $\beta$}
  \label{fig:ER100SMAPE}
\end{subfigure}
\begin{subfigure}{.5\textwidth}
  \centering
  \includegraphics[width=0.95\linewidth]{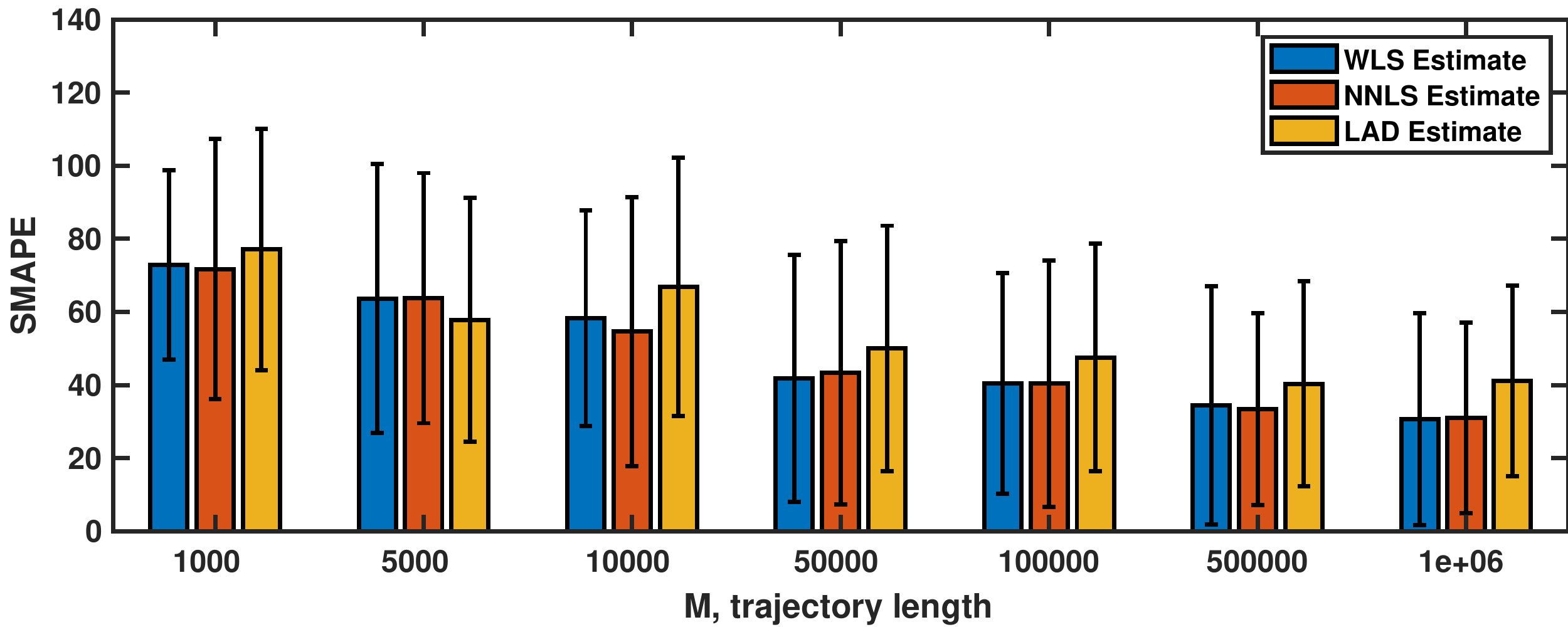}  
  \caption{SMAPE of Exogeneous Infection Rate $\beta$}
  \label{fig:ER100SMAPE}
\end{subfigure}
\hfill

\begin{subfigure}{.5\textwidth}
  \centering
  \includegraphics[width=0.95\linewidth]{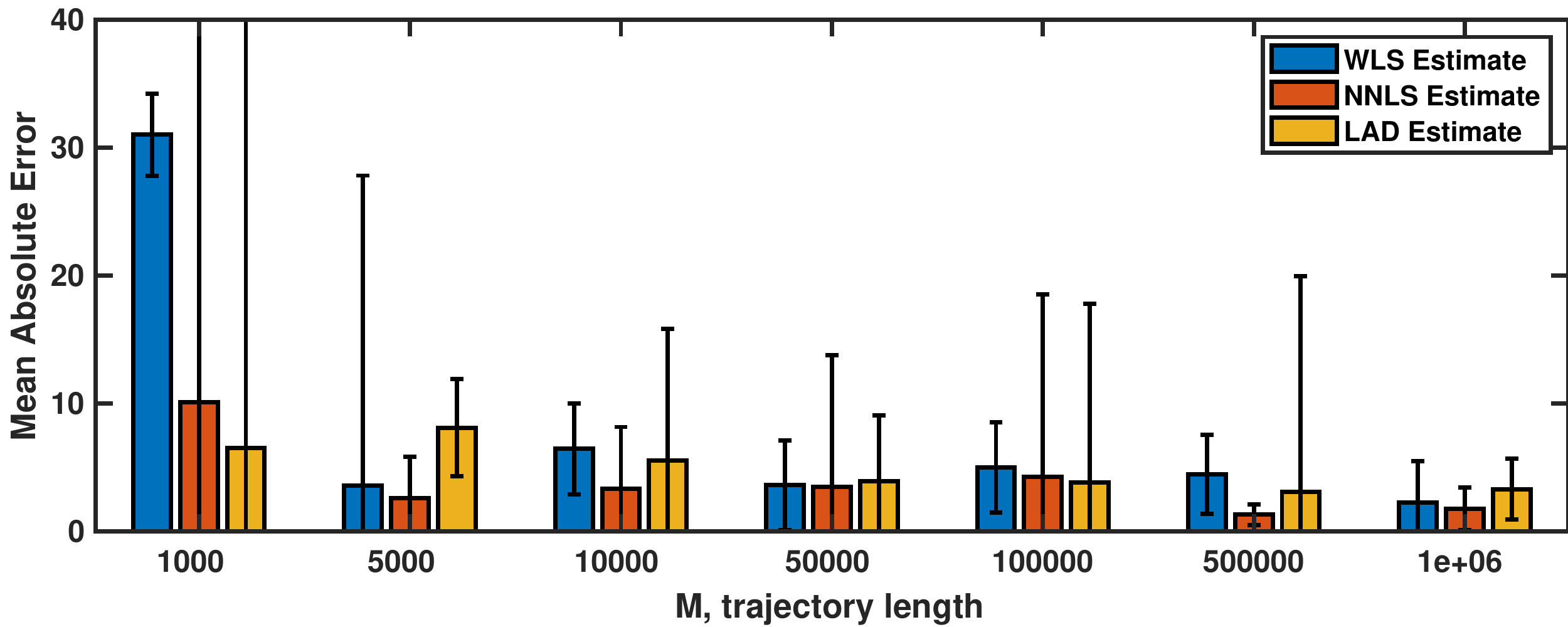}  
  \caption{Mean Absolute Error of Endogeneous Infection Rate $\delta$}
  \label{fig:ER100SMAPE}
\end{subfigure}
\begin{subfigure}{.5\textwidth}
  \centering
  \includegraphics[width=0.95\linewidth]{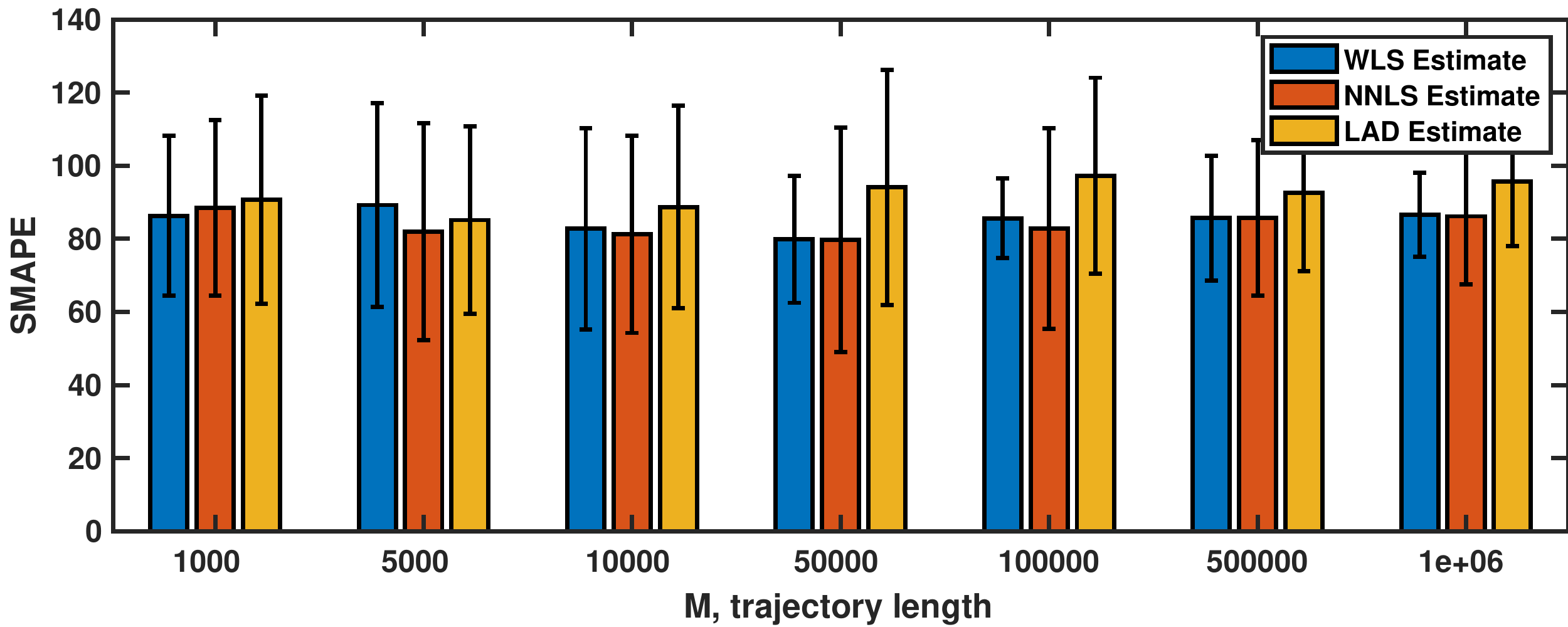}  
  \caption{SMAPE of Endogeneous Infection Rate $\delta$}
  \label{fig:ER100SMAPE}
\end{subfigure}
\caption{Reversible Contact Process: Error vs. Trajectory Length ($M$) for 100-Bus Power Grid}
\label{fig:revbus118result}
\end{figure*}

\section{Conclusion}\label{Sec:Con}

We studied continuous-time Markov processes whose transition rates are linear functions of a smaller set of dynamics parameters; our goal is to develop a method to estimate these parameters from a single continuously-observed trajectory (i.e., no missing observations). We defined the concept of holding classes, which are Markov states in a continuous-time Markov process that have the same expected holding time. It can be shown that for network-based epidemics models, it is more efficient to consider holding classes than individual Markov states. A much smaller system of equation can be formulated to estimate the dynamics parameters via weighted least squares methods. We showed with numerical experiments on synthetic and real-world graphs that reasonable estimates of dynamics parameters can be obtained even the length of the trajectory is many magnitudes smaller than the dimensionality of the Markov process. In all criteria, the estimation method works better for the contact process than the reversible contact process.


There are several directions for future work. It is not clear the dependency between network structure and the number of holding classes. Better upperbound may be derived, especially the study of asymptotic behavior of $K$ as $N \to \infty$. There are many more issues to address with estimation.  Realistic observations would have noise and/or missing values (i.e., discrete-time observations). We believe that estimation results can be improved using generalized least squares as the errors for different holding classes are not (realistically) independent. The challenge is on how to estimate the (non-diagonal) covariance matrix from a single trajectory.

\ifCLASSOPTIONcompsoc
  \section*{Acknowledgments}
\else
  \section*{Acknowledgment}
\fi
This work was supported in part by the National Science Foundation AI Institute in Dynamic Systems (Grant No. 2112085).

\bibliography{library}

%

\begin{IEEEbiography}[{\includegraphics[width=1in,height=1.25in,clip,keepaspectratio]{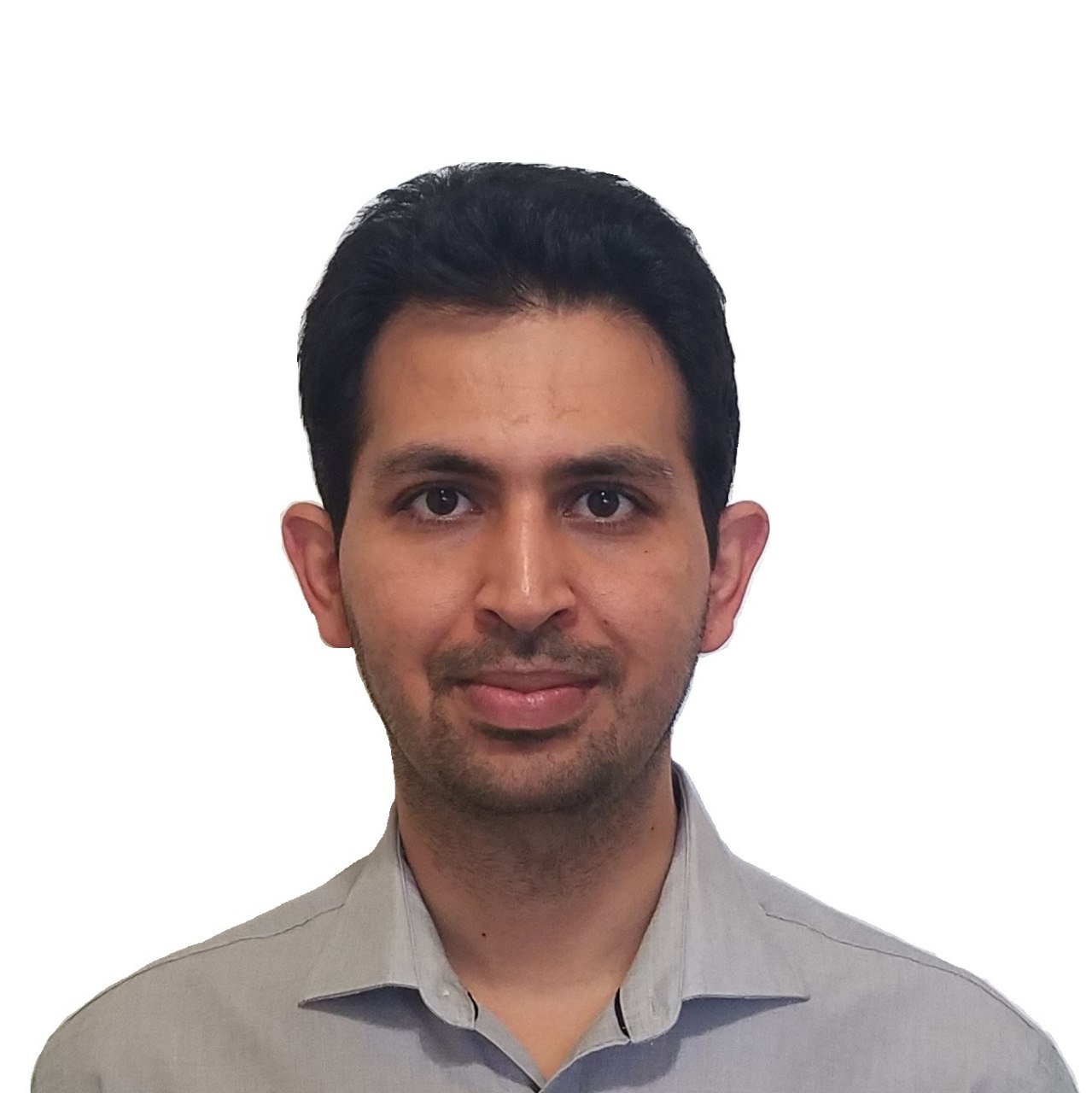}}]{Seyyed A. Fatemi} (Ph.D. U. of Hawaii Electrical Engineering) is a former post-doctoral researcher at the University of Hawai'i at M\={a}noa specializing in artificial intelligence, stochastic optimization, reinforcement learning, probabilistic and  machine learning modeling, statistical learning, estimation, prediction, time series analysis. His research interests include areas such as statistical signal processing, network science, big data and artificial intelligence, renewable energy and smart grid.
\end{IEEEbiography}

\begin{IEEEbiography}[{\includegraphics[width=1in,height=1.25in,clip,keepaspectratio]{./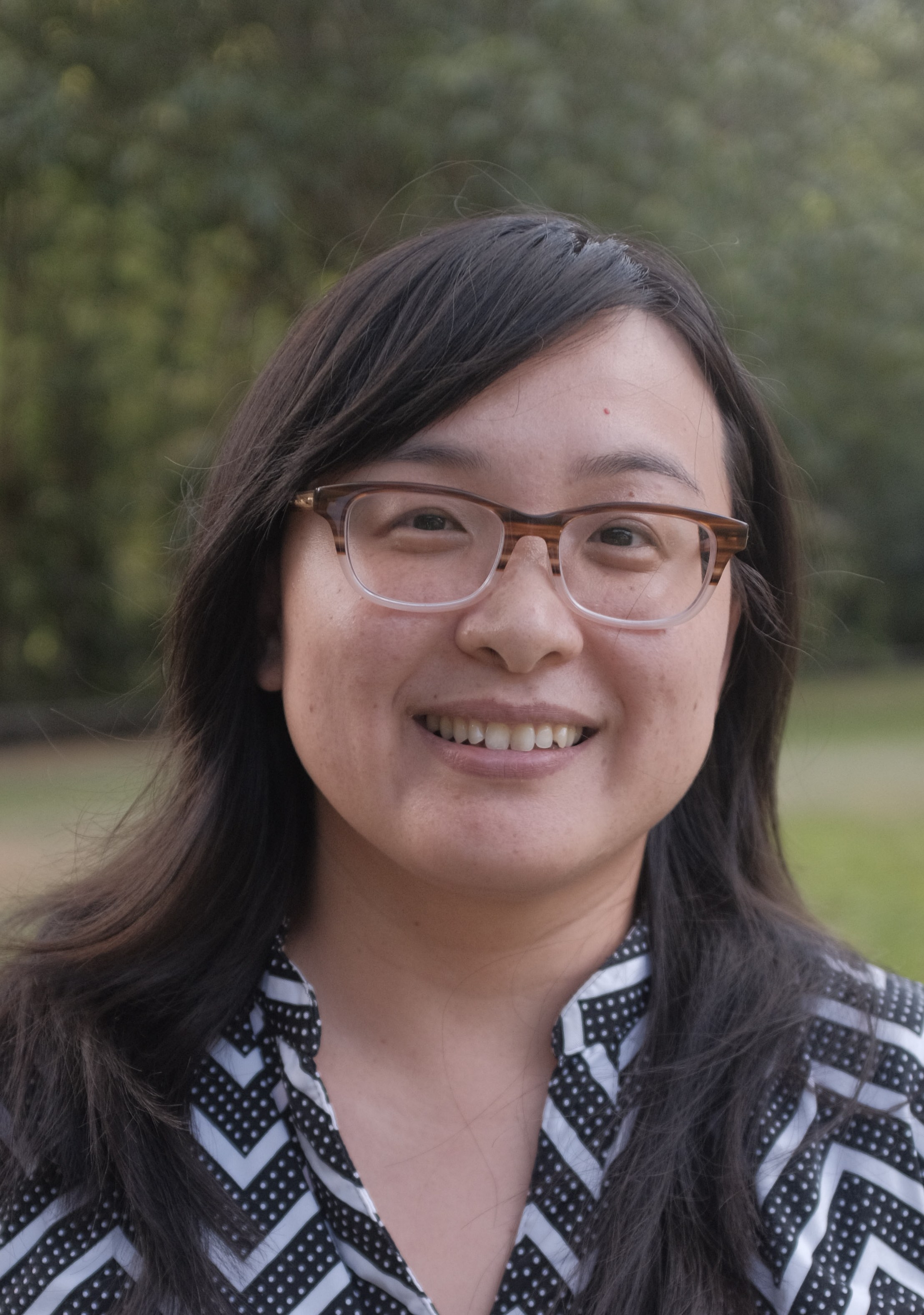}}]{June Zhang} is an assistant professor at the University of Hawai'i at M\={a}noa in the Department of Electrical and Computer Engineering. She received her B.S. with Highest Honor in Electrical and Computer Engineering from the Georgia Institute of Technology and M.S. in Electrical and Computer Engineering from Stanford University. She received a Ph.D. in Electrical and Computer Engineering from Carnegie Mellon University in December 2015. She is a co-PI of the NSF AI Institute in Dynamic Systems. Her research interests are graph dynamical systems, graph coding, representation learning, anomaly detection, graph signal processing, and natural language processing. 
\end{IEEEbiography}

\end{document}